	\newtheorem{remark}{Remark}[section]
	\newtheorem{proposition}{Proposition}[section]
	\numberwithin{equation}{section}
	\renewcommand{\theequation}{\thesection.\arabic{equation}}
\begin{document}
	\newcommand{\bea}{\begin{eqnarray}}
		\newcommand{\eea}{\end{eqnarray}}
	\newcommand{\nn}{\nonumber}
	\newcommand{\bee}{\begin{eqnarray*}}
		\newcommand{\eee}{\end{eqnarray*}}
	\newcommand{\lb}{\label}
	\newcommand{\nii}{\noindent}
	\newcommand{\ii}{\indent}
	\numberwithin{equation}{section}
	\renewcommand{\theequation}{\thesection.\arabic{equation}}
     \onehalfspacing
	\title{\bf A novel approach to generate distributions {\color{black} with applications to regression modeling}}
	\author{ Subhankar {Dutta}$^1$,~  Roberto {Vila}$^{2\,\!}$\thanks {Email address: rovig161@gmail.com}~~and~~Terezinha K. A. {Ribeiro}$^2$
		\\{\it \small $^1$Department of Mathematics, Bioinformatics and Computer Applications,}\\[-0,2cm]
        {\it \small Maulana Azad National Institute of Technology, Bhopal, 462003, India}\\
        {\it \small $^2$Department of Statistics, University of Brasilia, Brasilia/DF, 70910-900, Brazil}}
\date{}
\maketitle
\begin{center}
	\textbf{Abstract}
\end{center}
A novel approach to adding {\color{black} an} additional parameter to a family of distributions for better adaptability has been put forth.
This approach yields a versatile class of distributions supported on the positive real line. 
{\color{black} An important advantage of the proposed family is that the additional parameter admits a clear interpretation in terms of tail behavior, providing a simple mechanism for modulating tail heaviness.}
We proceed to analyze its mathematical characteristics, such as critical points, modality, stochastic representation, identifiability, quantiles, moments, and truncated moments.
We present {\color{black} two new regression models} for {\color{black} positive} continuous data based on submodels of the newly proposed family of distributions, in which the distribution of the response variable is reparameterized in terms of the median. We use the maximum likelihood method to estimate the parameters, which was implemented through the 
{\tt gamlss} package in {\tt R}. {\color{black} The proposed regression models were applied to a real dataset, and their advantages over common alternative regression models were demonstrated through quantile residual analysis and information criteria.}

\paragraph{Keywords.} Dutta family, regression model, R package, GAMLSS framework, DT-exponential model, {\color{black} DT-Weibull model}.


\section{Introduction}
\noindent Experts are becoming more interested in creating novel lifetime distributions to get around complex models as the variety of real-world problems along with applications encompassing complex phenomena grows. As a result, substantial improvements have been achieved in creating a variety of novel distribution classes that will produce more adaptable distributions rather than the traditional ones and offer improved information for modeling. The very first individual to propose a novel distribution by using a baseline probability was \cite{kumaraswamy1980generalized}. After that \cite{azzalini1985class} followed with a method for creating novel distributions with modifying a symmetric distribution through the addition of a skewing factor. There are five alternative ways to generate an entirely novel class of distributions. The initial method involves applying differential equations, and the subsequent one involves creating a weighted version of the baseline probability.  The third concept is about supplementing the baseline probability with a number of extra parameters required.  The discretization of the continuous probability densities is the subsequent strategy. The final strategy is the distribution transforming technique that creates a probabilistic representations of the baseline probability to alter the existing probability distribution models. The resulting connection ought to satisfy the distribution theory postulates.

The final approach for developing an entirely novel type of probability distribution is of relevance to us in this current piece of work. Various generators depending on various quantitative functional connections have been suggested in the existing literature. In this regard, \cite{mudholkar1993exponentiated} used a positive parameter to exponentiate an existing baseline distribution in order to construct the exponentiated class of probability distributions.  \cite{marshall1997new} added an extra shape parameter to the transforming technique in order to apply it to the survival function. \cite{eugene2002beta} developed the Beta-G class of distributions by using beta distribution as a generating function. \cite{zografos2009families} proposed novel family of distributions using gamma distribution. \cite{tahir2016logistic} suggested a transmuted family of distribution using logistic distribution. \cite{alizadeh2017generalized} introduced a generalized family of transmuted distributions. \cite{aslam2018cubic} proposed a cubic transmuted family of distributions. 

\cite{alzaatreh2013new} recently put up a technique for creating continuous distribution families, known as T-X family of distributions. \cite{maurya2016new} proposed log transformed family of distributions. Further, \cite{kumar2015method} introduced DUS family of distributions. Some researchers established some family of distributions using trigonometric functions as well. For instance,  \cite{kumar2015new} introduced Sine-g family of distributions. \cite{bakouch2018new} proposed new exponential trigonometric function to generate family of distributions.  \cite{souza2019general} introduced cosine-G class of distributions. In statistical literature, to find a novel generator of a family of distributions, numerous authors have made extensive modifications. For more details see  \cite{nassar2019marshall}, \cite{aslam2020modified}, \cite{el2019odd}, \cite{chakraborty2022kumaraswamy}, \cite{alsolmi2023investigating},  \cite{atchade2023topp} and  \cite{makubate2024novel}. 

{\color{black} The purpose of this study is to increase the flexibility of a baseline distribution through the introduction of an additional parameter. We refer to the resulting transformation as the Dutta transformation (DT) method. The proposed DT can be applied to any baseline distribution, generating a wide variety of density and hazard rate shapes. Moreover, the additional parameter admits a clear interpretation in terms of tail behavior, acting as a tail-modulation parameter that preserves the asymptotic tail behavior of the baseline distribution when $\beta>1$ and produces heavier tails when $0<\beta<1$. Thus, it provides an explicit mechanism for controlling tail heaviness while preserving the overall structure of the baseline model. Owing to its simple form, the DT is easy to implement in practice and yields flexible models that may serve as competitive alternatives to existing distributions.
}

In many fields, random phenomena are influenced by various factors, making it reasonable to assume that the data observed by the experimenter may not be identically distributed. One approach to handling such scenarios is to employ regression frameworks that model the parameters of the distribution of random phenomena (response variable) as functions of the covariates. To extend and demonstrate the applicability of a submodel generated by the DT approach, we propose a new regression model for unimodal continuous data. In this model, the regression structure is assigned to the distribution parameters, one of which represents the median.
This approach offers flexibility and allows for a direct interpretation of the relationships between response and covariates. The median is a measure of central tendency that is easy to interpret, robust to outliers, and more representative of asymmetric data.  Median regression models have attracted considerable attention over the years. A few examples are as follows. The median regression is a special case of regression quantiles proposed by \cite{koenker1978regression}. Median regression models for estimating how covariates affect the median survival time were proposed by \cite{ying1995survival}. Median-dispersion reparameterizations of the Kumaraswamy distribution, which facilitate its use in regression models, were introduced by \cite{mitnik2013kumaraswamy}.
A flexible class of models called power logit regression models, indexed by three parameters—one of which represents the median—was developed by \cite{queiroz2024power}.  \cite{bourguignon2025parametric} proposed a quantile regression model based on the inverse Gaussian distribution, discussing the special case of median regression and comparing it with the classical mean regression model. {\color{black}  \cite{otiniano2026regression} proposes a new class of regression models for heterogeneous extreme data based on the bimodal GEV distribution, where the median of the response is modeled through covariates.}
Additionally, several probability distributions used in the generalized additive model for location, scale and shape (GAMLSS)  \citep{stasinopoulos2008generalized, stasinopoulos2017flexible} are parameterized in terms of the median, facilitating the interpretation of regression coefficients.

The study is broken down in the following manner. In Section \ref{DTG}, the new generator called Dutta transformed-G (DT-G) family has been proposed.   
In Section \ref{properties}, we investigate several mathematical properties of the proposed DT-G model, including critical points, modality, stochastic representation, identifiability, quantiles, moments, and truncated moments.
{\color{black} In Section \ref{DTReg}, two new distributions, which are specific cases of the DT approach called the Dutta transformed exponential distribution (DTED) and Dutta transformed Weibull distribution (DTWD), are presented.
Additionally, we introduce alternative parameterizations of the DTED and DTWD that allow one of their parameters to represent the median. We then develop two new regression models based on the DTED and DTWD and discuss parameter estimation and diagnostic methods. In Section \ref{App}, we present an application to real data that illustrates the usefulness of the proposed regression models compared to alternative models commonly used in the literature.} Finally, in Section \ref{CR}, we provide some concluding remarks.  {\color{black} The code for all computational procedures in this article is available in an open-access repository at \url{https://github.com/terezinharibeiro/DT_Regression_Models}.}

\section{Proposed DT-G model}\label{DTG}
Similarly, with the goal to offer an entirely novel type of parametric lifetime distribution, the present paper will recommend a family of distributions, called the Dutta family (DF) of distributions. For any arbitrary baseline distribution function $G(x)$, the cumulative distribution function (CDF) and probability density function (PDF) of DF distributions can be expressed as 
\begin{align}\label{cdf-main}
    F(x)=G(x) e^{-\bar{G}(x)^{\beta}},~~ x>0;\beta>0, 
\end{align}
and
\begin{align*}
    f(x)= g(x)\left[1+\beta G(x)\bar{G}(x)^{\beta-1}\right]e^{-\bar{G}(x)^{\beta}}, ~~ {\color{black} x>0;\beta>0,} 
\end{align*}
respectively, where $g(x)$ is the pdf of the distribution $G(x)$, $\bar{G}(x)=1-G(x)$ and $\beta$ is the shape parameter {\color{black} with a direct interpretation in terms of tail behavior}. 
A random variable with DT-G distribution will be called DT-G random variable.
It is clear that $\lim_{x\to 0^+}f(x)=\lim_{x\to \infty}f(x)=0$. The survival function (SF) and hazard rate function (HRF) can be obtained as 
\begin{align*}
    S(t)= 1-G(t) e^{-\bar{G}(t)^{\beta}};~~t>0;{\color{black} \beta>0,} 
\end{align*}
and 
\begin{align*}
    h(t)=\frac{g(t)\left[1+\beta G(t)\bar{G}(t)^{\beta-1}\right]e^{-\bar{G}(t)^{\beta}}}{1-G(t) e^{-\bar{G}(t)^{\beta}}},~~{\color{black} t>0;\beta>0,} 
\end{align*}
 respectively.   

{\color{black}
An important advantage of the proposed family \eqref{cdf-main} is the explicit interpretation of the parameter $\beta$ in terms of tail behavior. Indeed,
\[
\frac{S(x)}{\bar G(x)}
=
\frac{1-e^{-\bar G(x)^\beta}}{\bar G(x)^\beta}
\,\bar G(x)^{\beta-1}
+
e^{-\bar G(x)^\beta}.
\]
Moreover,
\[
\lim_{x\to\infty}
\frac{1-e^{-\bar G(x)^\beta}}{\bar G(x)^\beta}
=
\lim_{x\to\infty}
e^{-\bar G(x)^\beta}
=
1
\quad
\text{and}
\quad
\lim_{x\to\infty}
\bar G(x)^{\beta-1}
=
\begin{cases}
0, & \beta>1,\\[2mm]
1, & \beta=1,\\[2mm]
\infty, & 0<\beta<1.
\end{cases}
\]
Therefore,
\[
\lim_{x\to\infty}
\frac{S(x)}{\bar G(x)}
=
\begin{cases}
1, & \beta>1,\\[2mm]
2, & \beta=1,\\[2mm]
\infty, & 0<\beta<1.
\end{cases}
\]

Hence, for $\beta>1$, the proposed transformation preserves the asymptotic tail behavior of the baseline distribution, implying that the resulting and baseline distributions are tail-equivalent. For $\beta=1$, the survival function is asymptotically twice that of the baseline model. In contrast, when $0<\beta<1$, the ratio $S(x)/\bar G(x)$ diverges to infinity, showing that the transformed distribution possesses a heavier tail than the baseline distribution.

This result provides a universal asymptotic characterization of the role of $\beta$, valid for any baseline distribution $G$. Unlike many existing generated families, whose tail properties often depend on the specific baseline model and therefore require separate analyses, the proposed family admits a transparent tail interpretation. Consequently, the parameter $\beta$ offers an explicit mechanism for modulating tail heaviness while preserving the overall structure of the baseline distribution.
}

 {\color{black}
 \begin{remark}
Note that the distribution defined in \eqref{cdf-main} may also be viewed within the general framework of generated distributions obtained by composing a baseline CDF $G$ with a monotone mapping on $[0,1]$. In particular, the proposed model corresponds to the transformation
$
x \mapsto x\exp\{-(1-x)^\beta\}
$, $0\leqslant x\leqslant 1$.
This construction is related to Lehmann-type models \citep{10.1214/aoms/1177729080}, although the classical Lehmann alternatives  are primarily based on power transformations of the form $G^\alpha$, originally introduced in connection with order statistics and rank-based procedures. Thus, while the proposed family shares the same transformation principle, it does not belong to the original Lehmann class in its strict sense.
Moreover, in our construction, the inclusion of the parameter $\beta$ not only increases the flexibility of the parent distribution but also provides an explicit mechanism for modulating tail behavior.
\end{remark}
 }

\section{Some structural properties}\label{properties}

In this section some mathematical properties associated with the proposed DT-G model are investigated.
{\color{black} Such results are standard when introducing new 
distributions, as they establish the theoretical and practical foundations 
for statistical use. In particular, they clarify the model's shape and 
flexibility, enable simulation, ensure identifiability, and provide tools 
(quantiles and moments) for inference, diagnostics, and applications. 
Hence, these properties make the model suitable for both theoretical study 
and practical modelling.}

\subsection{Critical points and modality}

A simple calculation shows that
\begin{align*}
    [\log(f(x))]'
    =
    {g'(x)\over g(x)}
+   
\beta g(x)[1-G(x)]^{\beta-2}
\left\{
V(x)
+
1-G(x)
\right\},
\end{align*}
where
\begin{align*}
    V(x)
    \equiv {
1-\beta G(x)
\over 
1+\beta G(x)[1-G(x)]^{\beta-1}
}.
\end{align*}
Therefore, $x$ is a critical point for $f$ if and only if $x$ satisfies the following equation
    
\begin{align}\label{crit-point}
 U(x)
        =
\beta [1-G(x)]^{\beta-2}
\left[
V(x)
+
1-G(x)
\right],      
\end{align}
where we are using the notation
\begin{align*}
    U(x)\equiv        -{g'(x)\over g^2(x)}.
\end{align*}

{\color{black} Observe that, for every $\beta>0$, the function $V(x)$ is decreasing in $x$.}
Let $A$ be the set of distributions $G$ satisfying the condition: $U(x)$ is increasing in $x$.

If $G\in A$, then the equation \eqref{crit-point} has a unique solution, denoted by $x_0$. In other words, $x_0$ is the unique critical point for the function $f$. But since $\lim_{x\to 0^+}f(x)=\lim_{x\to \infty}f(x)=0$, it follows that $x_0$ is the unique mode of the density $f$. That is, under the above assumptions, the DT-G model is unimodal.
{\color{black}
\begin{remark}
Note that the set $A$ is nonempty. In fact, Table~\ref{tab:examplesA} presents several distributions that belong to this set.
\begin{table}[h!]
\centering
\caption{Examples of positive-support distributions such that 
$U(x)=-\frac{g'(x)}{g^2(x)}$ is increasing.}
\begin{tabular}{llll}
\toprule
Distribution & Density $g(x)$ & Condition & Remark \\
\midrule
Exponential &
$\lambda e^{-\lambda x}$ &
$\lambda>0$ &
$U(x)=e^{\lambda x}$ increasing \\[6pt]

Gamma &
$\dfrac{\beta^\alpha}{\Gamma(\alpha)}
x^{\alpha-1}e^{-\beta x}$ &
$\alpha\geqslant 1$, $\beta>0$ &
Includes exponential and Erlang \\[10pt]

Weibull &
$k\lambda (\lambda x)^{k-1}e^{-(\lambda x)^k}$ &
$k\geqslant 1$, $\lambda>0$ &
Case $k=2$: Rayleigh \\[6pt]

Erlang &
Gamma with $\alpha\in\mathbb{N}$ &
$\alpha\geqslant 1$ &
Classical reliability model \\

\bottomrule
\end{tabular}
\label{tab:examplesA}
\end{table}
\end{remark}
}

\subsection{Stochastic representation} \label{Stochastic representation}

It is simple to observe that $F$ in \eqref{cdf-main} can be written as (for $x>0$)
\begin{align*}
    F(x)
    =
    \int_0^{G(x)} f_Y(y){\rm d}y
    =
    \mathbb{P}(Y\leqslant G(x)),
\end{align*}
where $f_Y(y)=1-F_W(1-y)+yf_W(1-y)$, $0<y<1$, is the PDF of a random variable $Y$, with $W\sim{\rm Weibull}(1,\beta)$, that is, $F_W(w)=1-e^{-w^\beta}$ and $f_W(w)=F_W'(w)$, $w>0$. 
Hence, if $X$ has CDF $F$, then 
\begin{align*}
    F_X(x)=F(x)=\mathbb{P}(G^{-1}(Y)\leqslant x), \quad \forall x>0.
\end{align*}
In other words,
\begin{align}\label{rep-stochastic}
    X\stackrel{d}{=}G^{-1}(Y),
\end{align}
where $\stackrel{d}{=}$ means equality in distribution of two random variables.

{\color{black}
\begin{remark}
Note that the stochastic representation in \eqref{rep-stochastic} is not unique. 
For example, let $U\sim U(0,1)$ and define $V\in(0,1)$ as the solution of
\[
U = V\exp\!\left(-(1-V)^{\beta}\right).
\]
Then $X\stackrel{d}{=}G^{-1}(V)$. 
This stochastic representation is useful for random number generation and has the advantage over \eqref{rep-stochastic} of avoiding the introduction of an auxiliary density.
\end{remark}
}

\subsection{Identifiability}

Let $\boldsymbol{\varrho}$ be the parameter vector corresponding to the CDF $G$ in \eqref{cdf-main}. 
In this section we will denote $G(x;\boldsymbol{\varrho})$ instead of $G(x)$, and $F(x;\beta,\boldsymbol{\varrho})$ instead of $F(x)$.

Suppose {\color{black} that the baseline family $G(x;\boldsymbol{\varrho})$ is identifiable in 
$\boldsymbol{\varrho}$.}

In what follows we will show that the function $(\beta,\boldsymbol{\varrho})\mapsto F(x;\beta,\boldsymbol{\varrho})$, $\forall x>0$, is injective.

Indeed, suppose that
\begin{align*}
  F(x;\beta_1,\boldsymbol{\varrho}_1)
  =
  F(x;\beta_2,\boldsymbol{\varrho}_2), \quad \forall x>0.
\end{align*}
Equivalently,
\begin{align}\label{assumption-1}
    G(x;\boldsymbol{\varrho}_1)\,
    e^{-\bar{G}(x;\boldsymbol{\varrho}_1)^{\beta_1}}
    =
    G(x;\boldsymbol{\varrho}_2)\,
    e^{-\bar{G}(x;\boldsymbol{\varrho}_2)^{\beta_2}},
    \quad \forall x>0 .
\end{align}

{\color{black}
Let $y_i(x)=G(x;\boldsymbol{\varrho}_i)\in(0,1)$, $i=1,2$. Then \eqref{assumption-1} can be written as
\begin{equation}\label{eq-id-1}
y_1(x)\,e^{-(1-y_1(x))^{\beta_1}}
=
y_2(x)\,e^{-(1-y_2(x))^{\beta_2}},
\quad \forall x>0 .
\end{equation}

Consider the map
\[
\phi(y,\beta)=y\,e^{-(1-y)^{\beta}},\quad 0<y<1,\ \beta>0 .
\]
Its partial derivatives are
\[
\frac{\partial\phi}{\partial y}
=
e^{-(1-y)^{\beta}}
\Bigl[1+\beta y(1-y)^{\beta-1}\Bigr]>0,
\quad
\frac{\partial\phi}{\partial\beta}
=
-y(1-y)^{\beta}\ln(1-y)\,
e^{-(1-y)^{\beta}}>0 ,
\]
since $\log(1-y)<0$ for $0<y<1$. Hence, $\phi(y,\beta)$ is strictly increasing in $y$ for fixed $\beta$, and strictly increasing in $\beta$ for fixed $y$.

Since \eqref{eq-id-1} holds for all $x$ and $\phi(y,\beta)$ is strictly increasing in $y$ for fixed $\beta$, any difference between 
$G(x;\boldsymbol{\varrho}_1)$ and $G(x;\boldsymbol{\varrho}_2)$ would produce a strict inequality. Therefore,
\[
G(x;\boldsymbol{\varrho}_1)
=
G(x;\boldsymbol{\varrho}_2),
\quad \forall x>0 .
\]
By identifiability of the baseline family $G$, it follows that
$
\boldsymbol{\varrho}_1=\boldsymbol{\varrho}_2.
$

With $y(x)=G(x;\boldsymbol{\varrho}_1)$ fixed, strict monotonicity of $\phi$ in $\beta$ then yields
$
\beta_1=\beta_2.
$

Hence $(\boldsymbol{\varrho}_1,\beta_1)=(\boldsymbol{\varrho}_2,\beta_2)$, proving identifiability.
}

\subsection{Quantiles}

The quantile function of $X$ with CDF $F(x)$, denoted $Q_X(p)$, $0 < p < 1$, facilitates random number generation and is invariant under monotonic transformations. Consequently, from stochastic representation \eqref{rep-stochastic}, we have
\begin{align*}
    Q_X(p)=G^{-1}(Q_Y(p)),
\end{align*}
where $Y$ has PDF $f_Y(y)$ given in Subsection \ref{Stochastic representation}. Note that $Q_Y(p)$ can be obtained by inverting the equation  
\begin{eqnarray}
 Q_Y(p) e^{-[1-Q_Y(p)]^{\beta}}=p.   
\label{quantileF}\end{eqnarray}

\subsection{Moments and truncated moments of DT-G random variables}

{\color{black}
Depending on the choice of the baseline CDF $G$, this subsection derives explicit formulas for the moments and truncated moments of a DT--$G$ random variable. 
To this end, we state a general result whose proof follows from Lemma~1 of \cite{math11092172}; for completeness and the reader’s convenience, we provide the details.
}
\begin{proposition}\label{Moments-truncated}
   Let $X$ be a real-valued random variable 
   and $p>0$ be a real number. For all $\varepsilon\geqslant 0$ and $\delta>0$, the following identity is satisfied:
   \begin{align*}
   \mathbb{E}(X^p\mathds{1}_{\{\varepsilon<X<\delta\}})
	=
	\varepsilon^{p}
	\mathbb{P}(X>\varepsilon)
	-
	\delta^{p}
	\mathbb{P}(X>\delta)
	+
	p
	\int_{\varepsilon}^{\delta}
	u^{p-1}
	\mathbb{P}(X>u)
	{\rm d}u.
\end{align*}
\end{proposition}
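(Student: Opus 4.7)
My plan is a two-step Fubini–Tonelli argument based on the elementary identity $x^p = \varepsilon^p + p\int_\varepsilon^x u^{p-1}\,\mathrm{d}u$, valid for $x\geqslant\varepsilon\geqslant 0$. I would substitute $X$ for $x$, multiply by $\mathds{1}_{\{\varepsilon<X<\delta\}}$, and take expectations. Using the fact that for $u\in(\varepsilon,\delta)$ one has $\mathds{1}_{\{u<X\}}\mathds{1}_{\{\varepsilon<X<\delta\}} = \mathds{1}_{\{u<X<\delta\}}$, and observing that Tonelli applies because the integrand is non-negative and $\int_\varepsilon^\delta u^{p-1}\mathrm{d}u = (\delta^p-\varepsilon^p)/p<\infty$ even when $\varepsilon=0$, this yields
\begin{align*}
\mathbb{E}(X^p\mathds{1}_{\{\varepsilon<X<\delta\}}) = \varepsilon^p\,\mathbb{P}(\varepsilon<X<\delta) + p\int_\varepsilon^\delta u^{p-1}\,\mathbb{P}(u<X<\delta)\,\mathrm{d}u.
\end{align*}

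In the second step I would convert the truncated probabilities into tail probabilities via the disjoint decomposition $\{X>u\} = \{u<X<\delta\}\sqcup\{X\geqslant\delta\}$, valid for every $u<\delta$, which gives $\mathbb{P}(u<X<\delta) = \mathbb{P}(X>u) - \mathbb{P}(X\geqslant\delta)$ and, as a particular case, $\mathbb{P}(\varepsilon<X<\delta) = \mathbb{P}(X>\varepsilon) - \mathbb{P}(X\geqslant\delta)$. Substituting these expressions and computing $p\int_\varepsilon^\delta u^{p-1}\,\mathrm{d}u = \delta^p-\varepsilon^p$, the two $\varepsilon^p\mathbb{P}(X\geqslant\delta)$ contributions telescope away, leaving
\begin{align*}
\mathbb{E}(X^p\mathds{1}_{\{\varepsilon<X<\delta\}}) = \varepsilon^p\,\mathbb{P}(X>\varepsilon) - \delta^p\,\mathbb{P}(X\geqslant\delta) + p\int_\varepsilon^\delta u^{p-1}\,\mathbb{P}(X>u)\,\mathrm{d}u.
\end{align*}

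The one subtle point that I would flag as the main obstacle is that this derivation produces $\mathbb{P}(X\geqslant\delta)$ rather than the $\mathbb{P}(X>\delta)$ appearing in the statement. The two coincide whenever $\mathbb{P}(X=\delta)=0$, which is automatic for any absolutely continuous $X$, hence in particular for the DT-G distributions to which the proposition is later applied. An equivalent route is Lebesgue–Stieltjes integration by parts on $\int_\varepsilon^\delta x^p\,\mathrm{d}F_X(x)$ with $\mathrm{d}F_X = -\mathrm{d}S_X$; this avoids Fubini but entails the same right-endpoint bookkeeping.
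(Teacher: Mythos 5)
Your proof is correct and is essentially the same argument as the paper's: the paper applies the layer-cake formula $\mathbb{E}(Z^p)=p\int_0^\infty u^{p-1}\mathbb{P}(Z>u)\,{\rm d}u$ to $Z=X\mathds{1}_{\{\varepsilon<X<\delta\}}$ and splits the integral at $\varepsilon$, arriving at exactly your intermediate identity $\varepsilon^{p}\mathbb{P}(\varepsilon<X<\delta)+p\int_{\varepsilon}^{\delta}u^{p-1}\mathbb{P}(u<X<\delta)\,{\rm d}u$; your pointwise identity $x^p=\varepsilon^p+p\int_\varepsilon^x u^{p-1}\,{\rm d}u$ plus Tonelli is the same computation performed in a different order. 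The right-endpoint issue you flag is genuine and worth recording: the argument actually yields $\delta^{p}\mathbb{P}(X\geqslant\delta)$, and the stated version with $\mathbb{P}(X>\delta)$ fails for a variable with an atom at $\delta$ (take $X=\delta$ almost surely with $\delta>\varepsilon$: the left side is $0$ while the stated right side equals $\delta^{p}$). The paper's own proof passes silently from the truncated probabilities to the tail probabilities in its final, uncommented step, so it implicitly assumes $\mathbb{P}(X=\delta)=0$; this is harmless for the absolutely continuous DT-G variables to which the proposition is later applied, but the hypothesis should be stated.
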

\begin{proof}
    For all $\varepsilon\geqslant 0$ and $\delta>0$, note that the random variable $X^p\mathds{1}_{\{\varepsilon<X<\delta\}}$ is non-negative and integrable. Then, by using the well-known formula $\mathbb{E}(Z)=p\int_0^\infty z^{p-1}\mathbb{P}(Z>z){\rm d}z$, with $\mathbb{P}(Z>0)=1$ and $p>0$, we have
\begin{align}\label{id-1-1}
	\mathbb{E}(X^p\mathds{1}_{\{\varepsilon<X<\delta\}})
	&=
	p
	\int_{0}^{\infty}
	u^{p-1}
	\mathbb{P}(X\mathds{1}_{\{\varepsilon<X<\delta\}}>u)
	{\rm d}u \nonumber
	\\[0,2cm]
	&=
	p
	\int_{0}^{\varepsilon}
	u^{p-1}
	\mathbb{P}(X\mathds{1}_{\{\varepsilon<X<\delta\}}>u)
	{\rm d}u
	+
	p
	\int_{\varepsilon}^{\infty}
	u^{p-1}
	\mathbb{P}(X\mathds{1}_{\{\varepsilon<X<\delta\}}>u)
	{\rm d}u.
\end{align}

Note that
\begin{align*}
	X\mathds{1}_{\{\varepsilon<X<\delta\}}>u 
	\quad \Longleftrightarrow \quad
	X>u \ \text{and} \ \varepsilon<X<\delta
	\quad \Longleftrightarrow \quad
	\max\{u,\varepsilon\}<X<\delta.
\end{align*}
Hence, \eqref{id-1-1} can be written as
\begin{align*}
	&=
	p
	\int_{0}^{\varepsilon}
	u^{p-1}
	\mathbb{P}(\max\{u,\varepsilon\}<X<\delta)
	{\rm d}u
	+
	p
	\int_{\varepsilon}^{\delta}
	u^{p-1}
	\mathbb{P}(\max\{u,\varepsilon\}<X<\delta)
	{\rm d}u
	\\[0,2cm]
	&=
	p
	\int_{0}^{\varepsilon}
	u^{p-1}
	\mathbb{P}(\varepsilon<X<\delta)
	{\rm d}u
	+
	p
	\int_{\varepsilon}^{\delta}
	u^{p-1}
	\mathbb{P}(u<X<\delta)
	{\rm d}u
	\\[0,2cm]
	&=
\varepsilon^{p}
\mathbb{P}(X>\varepsilon)
-
\delta^{p}
\mathbb{P}(X>\delta)
+
p
\int_{\varepsilon}^{\delta}
u^{p-1}
\mathbb{P}(X>u)
{\rm d}u.
\end{align*}
This completes the proof.
\end{proof}

Now, if $X$ has a DT-G distribution, by \eqref{cdf-main} and by Proposition \ref{Moments-truncated}, we have (for all $\varepsilon\geqslant 0$ and $\delta>0$)
   \begin{align}\label{int-1}
   \mathbb{E}(X^p\mathds{1}_{\{\varepsilon<X<\delta\}})
	=
	\varepsilon^{p}
        \left[1-G(\varepsilon) e^{-\bar{G}(\varepsilon)^{\beta}}\right]
	-
	\delta^{p}
	\left[1-G(\delta) e^{-\bar{G}(\delta)^{\beta}}\right]
	+
	p
	\int_{\varepsilon}^{\delta}
	u^{p-1}
	\left[1-G(u) e^{-\bar{G}(u)^{\beta}}\right]
	{\rm d}u.
\end{align}
Assuming that $\mathbb{E}(\vert X\vert^p )<\infty$ and 
\begin{itemize}
    \item 
Letting $\delta\to\infty$, we get
   \begin{align}\label{int-2}
   \mathbb{E}(X^p\mathds{1}_{\{X>\varepsilon\}})
	=
	\varepsilon^{p}
        \left[1-G(\varepsilon) e^{-\bar{G}(\varepsilon)^{\beta}}\right]
	+
	p
	\int_{\varepsilon}^{\infty}
	u^{p-1}
	\left[1-G(u) e^{-\bar{G}(u)^{\beta}}\right]
	{\rm d}u.
\end{align}
   \item 
Letting $\delta\to\infty$ and  $\varepsilon\to 0^+$, we get
   \begin{align}\label{int-3}
   \mathbb{E}(X^p)
	=
	p
	\int_{0}^{\infty}
	u^{p-1}
	\left[1-G(u) e^{-\bar{G}(u)^{\beta}}\right]
	{\rm d}u.
\end{align}
\end{itemize}

Then, from \eqref{int-1}, \eqref{int-2} and \eqref{int-3}, in the special case $G(x)=1-e^{-\lambda x}, \, x>0, \, \lambda>0$ (exponential distribution), we obtain
\begin{enumerate}
    \item[a.] 
       \begin{align*}
   \mathbb{E}(X^p\mathds{1}_{\{\varepsilon<X<\delta\}})
	&=
	\varepsilon^{p}
    \left[1-(1-e^{-\lambda \varepsilon}) e^{-e^{-\lambda\beta \varepsilon}}\right]
    -
	\delta^{p}
	\left[1-(1-e^{-\lambda \delta}) e^{-e^{-\lambda\beta \delta}}\right]
    \\[0,2cm]
	&+
	p
	\int_{\varepsilon}^{\delta}
	u^{p-1}
	\left[1-(1-e^{-\lambda u}) e^{-e^{-\lambda\beta u}}\right]
	{\rm d}u.
\end{align*}
    \item[b.] 
       \begin{align*}
   \mathbb{E}(X^p\mathds{1}_{\{X>\varepsilon\}})
	=
	\varepsilon^{p}
   \left[1-(1-e^{-\lambda \varepsilon}) e^{-e^{-\lambda\beta \varepsilon}}\right]
   +
	p
	\int_{\varepsilon}^{\infty}
	u^{p-1}
	\left[1-(1-e^{-\lambda u}) e^{-e^{-\lambda\beta u}}\right]
	{\rm d}u.
\end{align*}
    \item[c.] 
    $$
       \mathbb{E}(X^p)
	=
	p
	\int_{0}^{\infty}
	u^{p-1}
	\left[1-(1-e^{-\lambda u}) e^{-e^{-\lambda\beta u}}\right]
	{\rm d}u.
    $$
\end{enumerate}
    The above integrals in a, b and c cannot be expressed in terms of standard mathematical functions. Therefore, for its respective calculation, numerical analysis is required.

\section{ {\color{black} DT regression models}}\label{DTReg}

{\color{black}
First, this section introduces two specific models from the DT family of distributions. Subsequently, we present two new regression models based on reparameterizations of these particular cases within the DT family. Parameter inference is conducted using maximum likelihood estimation, and goodness of fit is assessed through quantile residuals.
}

\subsection{DT-exponential distribution} \label{DTE} 

\noindent Replacing $G(x)=1-e^{-\lambda x}$ in (\ref{cdf-main}), we have obtained a specific sub-model which is called Dutta transformed exponential distribution (DTED) with two parameters $(\beta,\lambda)$. Then the CDF and PDF of DTED are given by 
\begin{align} \label{4.1}
    F_{DTED}(x)= (1-e^{-\lambda x}) e^{-e^{-\beta\lambda x}},~\mbox{where}~x>0, \beta,\lambda>0;
\end{align}
and 
\begin{align*}
    f_{DTED}(x)=  \lambda e^{-\lambda x} \left[1+\beta (1-e^{-\lambda x})e^{-(\beta-1)\lambda x}\right]e^{-e^{-\beta\lambda x}}, 
\end{align*}
respectively. 


\subsection{ {\color{black}DT-Weibull distribution}} \label{DTW}

{\color{black}
\noindent Replacing $G(x)= 1-e^{-(\lambda x)^k}$ in (\ref{cdf-main}), we have obtained a specific sub-model which is called Dutta transformed Weibull distribution (DTWD) with three parameters $(\beta,\lambda, k)$. The CDF and PDF of DTWD are given by 
\begin{eqnarray}\label{CDFDTW}
 F_{DTWD}(x)= [1 - e^{-(\lambda x)^k}]\exp\left\{-e^{-\beta(\lambda x)^k} \right\},~\mbox{where}~x>0, \beta,\lambda, k>0;
\end{eqnarray}
and
\begin{align*}
f_{DTWD}(x) &= k \lambda^k x^{k-1} e^{-(\lambda x)^k} \exp\Big\{ - e^{-\beta(\lambda x)^k}\Big\} 
 \left[ 1 + \beta \, (1 - e^{-(\lambda x)^k}) \, e^{-(\beta-1)(\lambda x)^k}\right],
\end{align*}
respectively.

}

\subsection{Reparametrizations}

{\color{black}
In the formulation of the DT-exponential and DT-Weibull distributions in the previous subsections, the original parameters do not correspond to direct, interpretable characteristics of the distribution. Specifically, if regression structures are assigned to \(\lambda\), $\beta$ and $k$, it becomes challenging to interpret the effects of covariates on the response because these parameters do not correspond to straightforward features such as the mean, mode, median, or any quantile. Consequently, the direction or magnitude of the effect (positive or negative) cannot be easily determined, limiting the practical interpretability of the regression model. To address this issue, we introduce median-based reparameterizations of the DT-exponential and DT-Weibull distributions.  This approach enables us to measure the dependence of the median response on covariates through the regression coefficients.
}

{\color{black} In the following, we present the construction of the reparameterization for the DTED model.}
Let $q_\tau = Q_{DTED}(\tau)$, $0<\tau<1$ represent the $\tau$-th quantile of DT-exponential distribution. When $\tau = 0.5$, $\mu = q_{0.5}$ corresponds to the median of the distribution.
From (\ref{4.1}), we have that the quantile $q_\tau$ satifies:
\begin{eqnarray}\label{tau}
	\tau = (1-e^{-\lambda q_\tau}) e^{-e^{-\beta\lambda q_\tau}}.
\end{eqnarray}

{\color{black}
To propose a new parameterization, we first isolate the value of \(\beta\) in Equation (\ref{tau}). To this end, we apply the natural logarithm to both sides of the equation and then use logarithmic properties, yielding
\begin{eqnarray*}
    \log\left(\frac{1-e^{-\lambda q_\tau}}{\tau}\right) = e^{-\beta\lambda q_\tau}.
\end{eqnarray*}

To further isolate the value of $\beta$, we apply the natural logarithm once again, obtaining
\begin{eqnarray}\label{betaf}
     \beta = -\frac{1}{\lambda}\log\left[\left(\log\left(\frac{1 - e^{-\lambda q_\tau}}{\tau}\right)\right)^{1/ q_\tau}\right].
\end{eqnarray}

The initial idea is to replace the parameter \(\beta\) in Equation (\ref{4.1}) with the expression in (\ref{betaf}). This substitution allows the CDF (and PDF) to depend on the \(\tau\)-th quantile, making at least one parameter a direct characteristic of the distribution. However, the expression in (\ref{betaf}) is well defined if $0 < \tau<1-e^{-\lambda q_\tau}$, which is equivalent to $\lambda > -\log(1-\tau)/q_\tau$. In particular, if $\tau=0.5$, it is necessary that $\lambda > \log(2)/q_{0.5}$. Hence, to establish a well-defined median parameterization for the DTED model, we must take this condition into account.

To achieve a median regression structure, we will proceed with the following parameterization of the DT-exponential distribution:
}
\begin{eqnarray*}
    \lambda = \frac{\sigma + \log(2)}{\mu}, \quad \beta = -\frac{1}{\sigma + \log(2)}\log\left(\log\left(\frac{1 - e^{-(\sigma + \log(2))}}{0.5}\right)\right),
\end{eqnarray*}
where $\mu>0$ denotes the median of DT-exponential distribution  {\color{black} and $\sigma>0$. Note that $\beta$ has the form given in (\ref{betaf}) and depends only on \(\sigma\), while \(\lambda\) is expressed as a function of $\mu$ and $\sigma$, incorporating the additional term $\log(2)$ to ensure that the parameterization is always valid.
}

The CDF e PDF can be written, in
the new parameterization, as
\begin{eqnarray} \label{RCDF}
   {\color{black} F_{RDTED}(x; \mu, \sigma)}= \left(1-\exp\left\{-\frac{(\sigma+\log(2))x}{\mu}\right\}\right)\exp\left\{- {\color{black}A(\sigma)}^{x/\mu} \right\},
\end{eqnarray}
{where} $x>0$, $\mu>0$, $\sigma>0$, with
{\color{black}
\begin{eqnarray*}
    A(\sigma) = \log\left(\frac{1-e^{-(\sigma+\log(2))}}{0.5}\right) = \log(2-e^{-\sigma}),
\end{eqnarray*}
}and 
\begin{align*}
{\color{black}f_{RDTED}(x; \mu, \sigma)} &= \frac{(\sigma+\log(2))}{\mu} \exp\left\{-\frac{(\sigma+\log(2))x}{\mu}\right\}  \Bigg[ 1 - \frac{{\color{black} A(\sigma)}^{{x}/{\mu}}}{\sigma+\log(2)} \log( {\color{black}A(\sigma)})\\ 
&\quad \times \left( 1 - \exp\left\{-\frac{(\sigma+\log(2))x}{\mu}\right\} \right)  \exp\left\{\frac{(\sigma+\log(2))x}{\mu}\right\} \Bigg] \exp\left\{ -  {\color{black} A(\sigma)}^{{x}/{\mu}} \right\}.
\end{align*}
{\color{black} This distribution is called the reparameterized DT-exponential distribution (RDTED) and is denoted by \( X \sim \mathrm{RDTED}(\mu, \sigma) \).
}
This approach will result in a simpler and more interpretable regression model based on the  DT-exponential distribution, with regressions structures assigned to the parameters $\mu$ and $\sigma$. {\color{black} Although $\mu$ denotes the median of the RTDED distribution, it also acts as a scale parameter, since \(F_{RDTED}(x; \mu, \sigma) = F_{RDTED}(x/\mu; 1, \sigma)\). In contrast, the parameter \(\sigma\) is neither a scale nor a location parameter; rather, it controls the shape of the distribution. }

{\color{black}
Figure \ref{densities} shows the PDF curves of the RDTED distribution, obtained by varying one parameter while keeping the other fixed. As expected, the parameter \(\mu\) affects the scale of the distribution. Larger values of \(\mu\) increase the scale, making the distribution more platykurtic. Additionally, as \(\mu\) increases, the distribution shifts slightly to the right, which is consistent with \(\mu\) representing the median of the distribution. On the other hand, it is clear that the parameter \(\sigma\) alters the shape of the distribution, producing either a unimodal or an inverted ‘J’ shape.

}
\begin{figure}[!htb]
	\centering
	\subfigure[Varyng $\mu$ with $\sigma=0.1$]{\includegraphics[width=8cm,height=7cm]{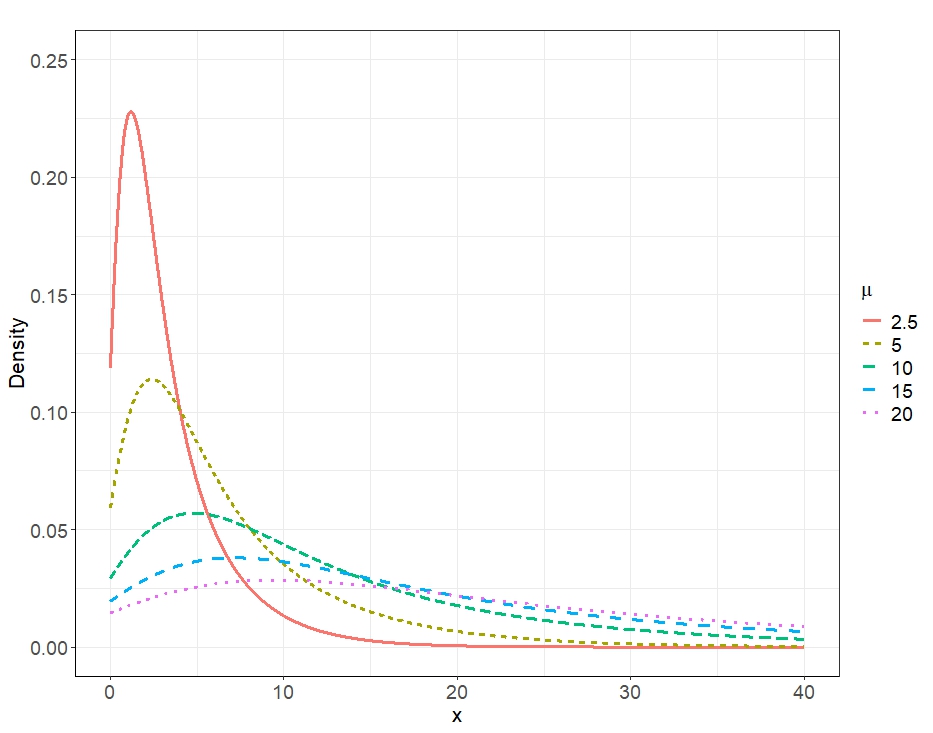}}
	\qquad
    \subfigure[Varyng $\sigma$ with $\mu=10$]
    {\includegraphics[width=8cm,height=7cm]{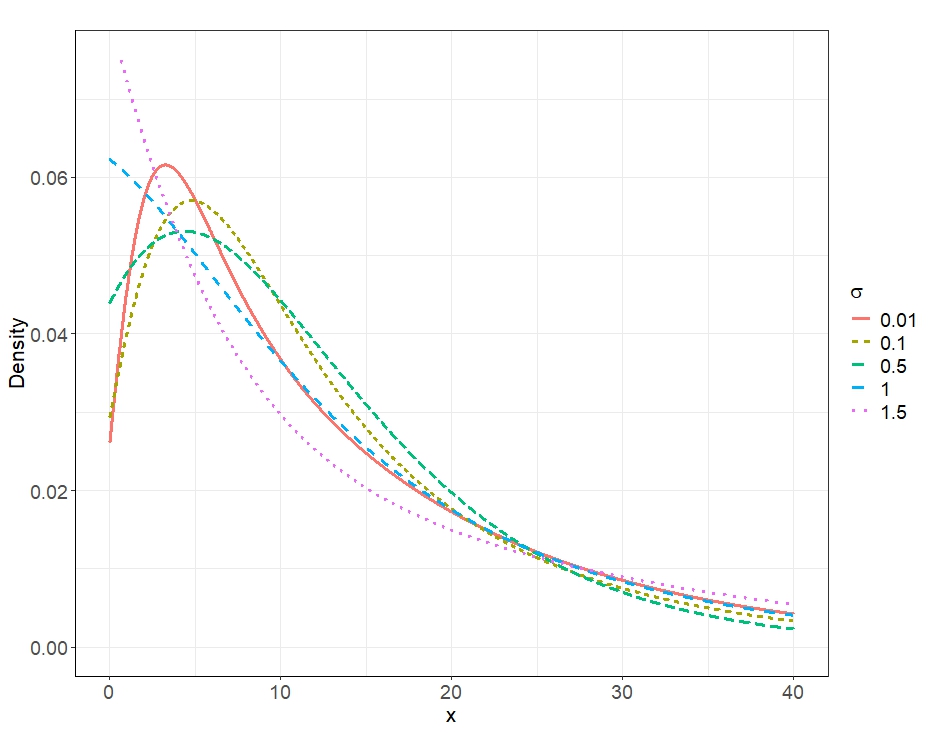}}
    \caption{{\color{black}RDTED densities for different values of $\mu$ and $\sigma$.} }
    \label{densities}
\end{figure}

{\color{black}
For the DTWD model, we propose the same type of reparameterization as that used for the DTED model. Consider $q_\tau = Q_{DTWD}(\tau)$, $0<\tau<1$ represent the $\tau$-th quantile of DT-Weibull distribution. 
From (\ref{CDFDTW}), we have that the quantile $q_\tau$ satifies:
\begin{eqnarray}\label{tau}
	\tau = [1 - e^{-(\lambda q_\tau)^k}]\exp\left\{-[e^{-(\lambda q_\tau)^k}]^\beta \right\}.
\end{eqnarray}
Using a similar strategy as before, we isolate the parameter \(\beta\) as follows:
\begin{eqnarray}\label{betaf2}
     \beta = -\frac{1}{\lambda ^k}\log\left[\left(\log\left(\frac{1 - e^{-(\lambda q_\tau)^k}}{\tau}\right)\right)^{1/ {(q_\tau}^{k})}\right],
\end{eqnarray}
provided that $0 < \tau<1-e^{-(\lambda q_\tau)^k}$ which is equivalent to $\lambda > [-\log(1-\tau)]^{1/k}/q_\tau$. In particular, if $\tau=0.5$, it is necessary that $\lambda > \log(2)^{1/k}/q_{0.5}$.

To achieve a well-defined median regression structure, we consider the following parameterization of the DT-Weibull distribution: 
\begin{eqnarray*}
 k=\nu, \quad     \lambda = \frac{\sigma + \log(2)^{1/\nu}}{\mu}, \quad \beta = - \frac{1}{\big(\sigma + (\log (2))^{1/\nu}\big)^\nu} 
\; \log \Bigg[ \log \Bigg( \frac{1 - \exp\Big\{- \big(\sigma + (\log( 2))^{1/\nu}\big)^\nu \Big\}}{0.5} \Bigg) \Bigg],
\end{eqnarray*}
where $\mu>0$ denotes the median of DT-Weibull distribution,  $\sigma>0$, and $\nu>0$. 
Note that $\beta$ has the form given in (\ref{betaf2}) and depends of \(\sigma\) and $\nu$, while \(\lambda\) is expressed as a function of $\mu$, $\sigma$, and $\nu$ incorporating the additional term $\log(2)^{1/\nu}$. If $\nu=1$, we obtain the RTDED model. Therefore, the RDTWD model generalizes the RTDED model.

The CDF e PDF can be written, in
the new parameterization, as
\begin{eqnarray} \label{RCDF2}
   {\color{black} F_{RDTWD}(x; \mu, \sigma, \nu)}= \left(1-\exp\left\{-\left(\frac{(\sigma+(\log(2))^{1/\nu})\,x}{\mu}\right)^\nu\right\}\right)
\exp\left\{-A(\sigma, \nu)^{(x/\mu)^\nu}\right\},
\end{eqnarray}
{where} $x>0$, $\mu>0$, $\sigma>0$, $\nu>0$, with
{\color{black}
\begin{eqnarray*}
    A(\sigma, \nu) = \log \Bigg( \frac{1 - \exp\Big\{- \big(\sigma + (\log(2))^{1/\nu}\big)^\nu \Big\}}{0.5} \Bigg),
\end{eqnarray*}
}and 
\begin{align*}
{\color{black}f_{RDTWD}(x; \mu, \sigma, \nu)} &= \nu \left(\frac{\sigma+(\log(2))^{1/\nu}}{\mu}\right)^\nu x^{\nu-1}
\exp\left\{-\left(\frac{(\sigma+(\log(2))^{1/\nu})\,x}{\mu}\right)^\nu\right\} \\
&\quad \times \Bigg[
1 - \frac{\log(A(\sigma, \nu))}{(\sigma+(\log(2))^{1/\nu})^\nu}
\left(1-\exp\left\{-\left(\frac{(\sigma+(\log(2))^{1/\nu})\,x}{\mu}\right)^\nu\right\}\right) \\
&\quad\times A(\sigma, \nu)^{(x/\mu)^\nu}
\exp\left\{\left(\frac{(\sigma+(\log(2))^{1/\nu})\,x}{\mu}\right)^\nu\right\}
\Bigg]\exp\left\{-A(\sigma, \nu)^{(x/\mu)^\nu}\right\} .
\end{align*}
This distribution is called the reparameterized DT-Weibull distribution (RDTWD) and is denoted by \( X \sim \mathrm{RDTWD}(\mu, \sigma, \nu) \).
For this distribution, the parameter \(\mu\) is a scale parameter because $ F_{RDTWD}(x; \mu, \sigma, \nu) = F_{RDTWD}(x/\mu; 1, \sigma, \nu)$, whereas the parameters \(\sigma\) and \(\nu\) are shape parameters.

}

{\color{black}
Figure \ref{densities_DTWD} displays the PDF curves of the RDTWD distribution, obtained by varying one parameter at a time while keeping the others fixed. As in the RDTED case, the parameter $\mu$ primarily controls the scale of the distribution, with larger values leading to an increased spread. In addition, higher values of $\mu$ induce a slight rightward shift in the distribution, reflecting changes in its central tendency (median). The parameters $\sigma$ and $\nu$, on the other hand, play a key role in shaping the distribution, allowing for a variety of forms, including inverted “J”-shaped, unimodal, and bimodal patterns.

}
\begin{figure}[!htb]
	\centering
	\subfigure[Varyng $\mu$ with $\sigma=0.1$ and $\nu=1.5$]{\includegraphics[width=8cm,height=7cm]{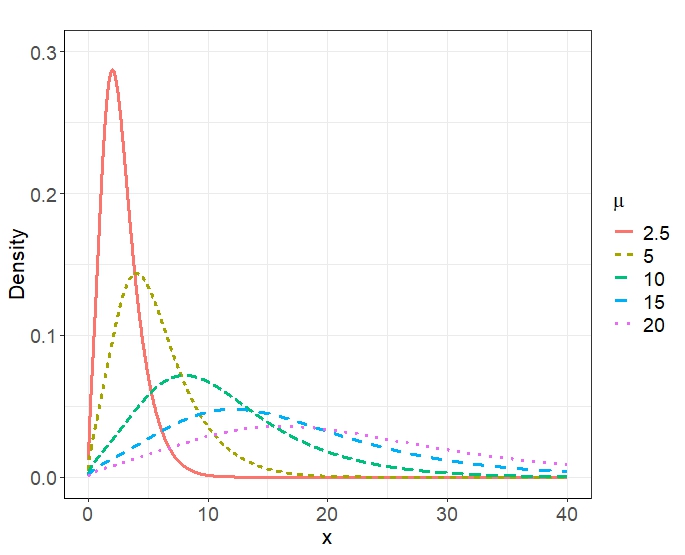}}
	\qquad
    \subfigure[Varyng $\sigma$ with $\mu=10$ and $\nu=2$]
    {\includegraphics[width=8cm,height=7cm]{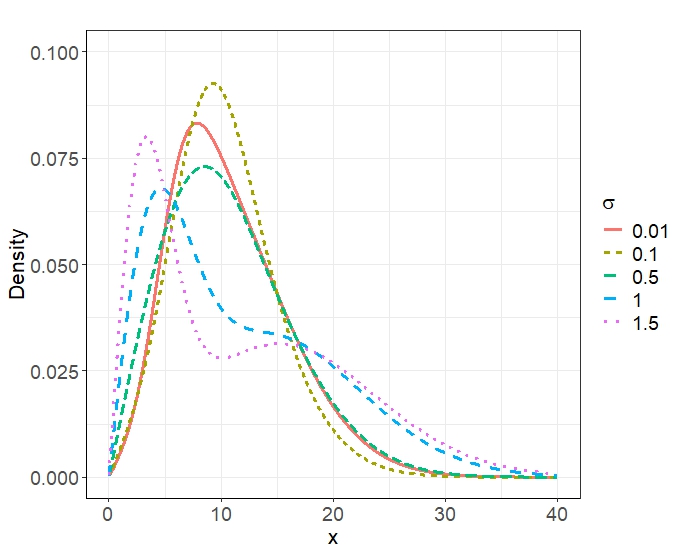}}\\
    \subfigure[Varyng $\nu$ with $\mu=10$ and $\sigma=1.0$]{\includegraphics[width=8cm,height=7cm]{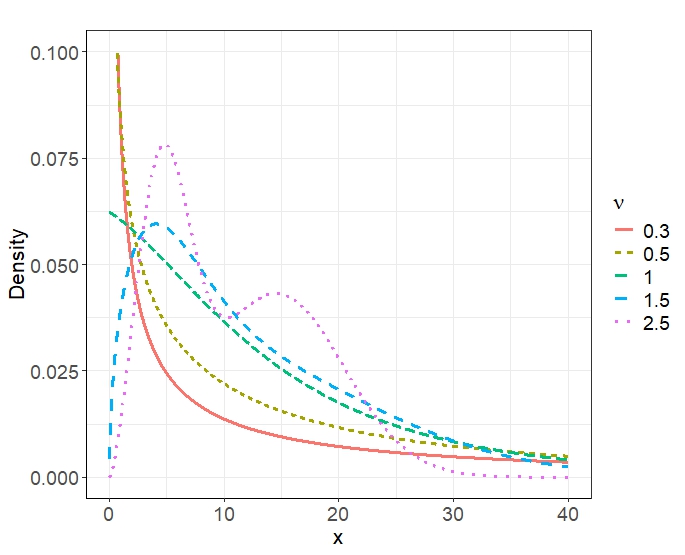}}
    \caption{{\color{black}RDTWD densities for different values of $\mu$,  $\sigma$ and $\nu$.} }
    \label{densities_DTWD}
\end{figure}

{\color{black}
One way to propose a reparameterization for other cases of the DT family is based on the general idea of isolating the value of \(\beta\) in the quantile function (\ref{quantileF}). By doing so, we express \(\beta\) as a function of the distribution's quantiles, allowing at least one parameter to have a simple interpretation, which is crucial in a regression context.
Next, the obtained expression for \(\beta\) is substituted into \(F(x)\) given in (\ref{cdf-main}), allowing the CDF to be expressed in terms of quantiles. In particular, for the DTED and DTWD models, we consider the median, i.e., the 50th percentile. However, certain precautions must be taken to ensure that the inversions used to derive the expression for \(\beta\) are well-defined. In the DTED and DTWD cases, it was also necessary to rewrite \(\lambda\), enforcing the condition to ensure that \(\beta\) is well-defined. In summary, reparameterizations similar to those proposed in this work could be applied to other cases within the DT family, ensuring that the resulting CDFs and PDFs are well-defined.
}

\subsection{Regression structure}

{\color{black}
To establish the regression structures under RTDED and RTDWD,} consider the following. Let $Y_1,\ldots,Y_n$ be $n$ independent random variables representing $n$ response variables, where {\color{black} $Y_i\sim {\rm RDTWD}(\mu_i, \sigma_i, \nu_i)$. If $\nu_i=1$, $\forall i=1,\ldots,n$, we have $Y_i\sim {\rm RDTED}(\mu_i, \sigma_i)$.}
The
model is obtained by assuming that $\mu_i$, $\sigma_i$ {\color{black} and $\nu_i$} can be written as
\begin{eqnarray*}
	g_\mu(\mu_i)= {\boldsymbol W}_i^\top{\boldsymbol \alpha} = \sum_{j=1}^{p_1} w_{ij}\alpha_j, \\ 
 g_\sigma(\sigma_i) = {\boldsymbol Z}_i^\top{\boldsymbol \gamma} =\sum_{j=1}^{p_2} z_{ij}\gamma_j,\\
{\color{black} g_\nu(\nu_i) = {\boldsymbol S}_i^\top{\boldsymbol \xi} =\sum_{j=1}^{p_3} s_{ij}\xi_j,}
\end{eqnarray*}
where ${\boldsymbol \alpha}= (\alpha_1,\ldots,\alpha_{p_1})^\top\in\mathbb{R}^{p_1}$, ${\boldsymbol \gamma}= (\gamma_1,\ldots,\gamma_{p_2})^\top\in\mathbb{R}^{p_2}$ and {\color{black} ${\boldsymbol \xi}= (\xi_1,\ldots,\xi_{p_3})^\top\in\mathbb{R}^{p_3}$ } are vectors of unknown regression parameters, ${\boldsymbol W}_i=(w_{i1},\ldots,w_{ip_1})^\top$, ${\boldsymbol Z}_i=(z_{i1},\ldots,z_{ip_2})^\top$ {\color{black} and ${\boldsymbol S}_i=(s_{i1},\ldots,s_{ip_3})^\top$} are vectors of covariates of lengths $p_1$, $p_2$ {\color{black} and $p_3$  ($p_1+p_2+p_3 =p<n$),} respectively, which
are assumed fixed and known. In addition, $g_\mu(\cdot): (0,\infty)\rightarrow \mathbb{R}$, $g_\sigma(\cdot): (0,\infty)\rightarrow \mathbb{R}$ {\color{black} and $g_\nu(\cdot): (0,\infty)\rightarrow \mathbb{R}$} are strictly monotonic and twice differentiable link functions.
There are several possible choices for the link functions. For instance,
one can use the identity, inverse, square root, and logarithm link function defined as $g(x)=x$, $g(x)=1/x$, $g(x)=\sqrt{x}$, and $g(x)=\log(x)$, respectively. Since that $\mu>0$, $\sigma>0$, {\color{black} and $\nu>0$}, the logarithm link is the most appropriate, i.e, $g_\mu(x) = g_\sigma(x) { \color{black} = g_\nu(x)}=\log(x)$. In particular, when using the logarithmic link function, each regression coefficient represents the log-relative change in the parameter ($\mu$, $\sigma$ {\color{black}or $\nu$}) for a one-unit increase in the covariate, holding other variables constant; that is, exponential of the coefficient gives the multiplicative effect on the parameter.

{\color{black}
Modeling the parameters \(\sigma\) and $\nu$ as functions of covariates in the RDTWD model enhances flexibility in fitting the data. However, the coefficients $\gamma_j$ and $\xi_j$ associated with these submodels have no practical interpretation, since \(\sigma\) and $\nu$ serve as shape parameters. Therefore, we recommend using the RDTWD (and RDTED) model with regression structures on all parameters when the goal is to improve the flexibility of the fit, rather than to interpret all parameters. The only meaningful interpretations pertain to the regression coefficients $\alpha_j$, $j=1,\ldots, p_1$, associated with the median response.

}

\subsection{Inference and diagnostics}

The vector of unknown parameters for the {\color{black}RDTWD} regression model is represented as {\color{black} ${\boldsymbol \theta} = ({\boldsymbol \alpha}^\top, {\boldsymbol \gamma}^\top, {\boldsymbol \xi}^\top)^\top \in\mathbb{R}^{p}$}. 
The inference procedure adopted here will be based on maximum likelihood estimation.
The log-likelihood function based on a sample of $n$ independent observations $y_1, \dots,  y_n$ is given by
\begin{eqnarray}
	\ell({\boldsymbol\theta}) = \sum_{i=1}^{n}\ell_i(\boldsymbol{\theta}),
\label{loglik}\end{eqnarray}
where {\color{black} $ \ell_i(\boldsymbol{\theta}) = \log[f_{RDTWD}(y_i; \mu_i, \sigma_i, \nu_i)]$ with $f_{RDTWD}(\cdot; {\color{black}\mu_i, \sigma_i, \nu_i})$ denoting the PDF of RDTWD evaluated in the parameters $\mu_i$, $\sigma_i$ and $\nu_i$.} The maximum likelihood estimator (MLE) is obtained by maximizing the log-likelihood function in (\ref{loglik}) with respect to $\boldsymbol{\theta}$. The MLE for $\boldsymbol{\theta}$ is denoted by $\widehat{\boldsymbol{\theta}}$.

The contribution of the $i$-th observation in the log-likelihood function is given by  
{\color{black}
\begin{eqnarray*}
    \ell_i(\boldsymbol{\theta}) &=& 
\log(\nu_i) +\nu_i\log\left(\frac{\sigma_i+(\log(2))^{1/\nu_i}}{\mu_i}\right) +(\nu_i-1) \log(y_i)
-\left(\frac{(\sigma_i+(\log(2))^{1/\nu_i})\,y_i}{\mu_i}\right)^{\nu_i} \\
\quad &+& \log\Bigg[
1 - \frac{\log(A(\sigma_i, \nu_i))}{(\sigma_i+(\log(2))^{1/\nu_i})^{\nu_i}}
\left(1-\exp\left\{-\left(\frac{(\sigma_i+(\log(2))^{1/\nu_i})\,y_i}{\mu_i}\right)^{\nu_i}\right\}\right) A(\sigma_i, \nu_i)^{(y_i/\mu_i)^{\nu_i}}\\
\quad &\times&\exp\left\{\left(\frac{(\sigma_i+(\log(2))^{1/\nu_i})\,y_i}{\mu_i}\right)^{\nu_i}\right\}
\Bigg] -A(\sigma_i, \nu_i)^{(y_i/\mu_i)^{\nu_i}}
\end{eqnarray*}
}

As solving the first-order conditions of the log-likelihood function yields a complex system of nonlinear equations, numerical methods are required to compute the maximum likelihood estimate of $\boldsymbol{\theta}$. To obtain the estimates, we use
 the {\tt gamlss} function in the package with the same name \citep{stasinopoulos2008generalized} contained in {\tt R} \citep{r2025}. Such framework considers maximum likelihood estimation procedure based on the Rigby and Stasinopoulos algorithm for the GAMLSS \citep{stasinopoulos2008generalized, stasinopoulos2017flexible}.
As the {\tt gamlss} function requires initial values to the parameters, {\color{black} for RDTED model} we consider $ \mu_i^{(0)} = {\rm med}(y)$ and $\sigma_i^{(0)} = 1$, $i = 1, \ldots, n$, with ${\rm med}(y)$ denoting the sample median. {\color{black} These starting values remain the same whether we consider modeling only the median or modeling both parameters through covariates.
For RDTWD models that incorporate covariates solely in the median submodel, we recommend using the maximum likelihood estimates obtained from RDTED regression models with the same specifications and \(\nu_i^{(0)} = 1\) for \(i = 1, \ldots, n\).
For RDTWD models that incorporate covariates in all submodels, we consider two approaches. First, we suggest using maximum likelihood estimates obtained from RDTWD regression models with a simpler specification; that is, covariates included only in \(\mu\) or in both \(\mu\) and \(\sigma\). The second approach involves using maximum likelihood estimates obtained from RDTED regression models with the same specifications in \(\mu\) and \(\sigma\), and setting \(\nu_i^{(0)} = 1\) for \(i = 1, \ldots, n\).
} 
 The {\tt gamlss} package offers a user-friendly interface for computing estimates and their standard errors, quantile residuals, predictions, and more. Additionally, the {\tt gamlss.ggplots} package \citep{gamlssggplot2} available in {\tt R} provides a collection of functions for visualizing GAMLSS objects with high-quality and customizable graphics.
Further details on how to implement a new probability distribution as a {\tt gamlss.family} object for use in GAMLSS modeling can be found in \cite{stasinopoulos2017flexible}.

To obtain standard errors of the maximum likelihood estimates and to perform hypothesis testing, we will rely on the asymptotic distribution of the maximum likelihood estimators. Under usual regularity conditions for maximum likelihood estimation, when the sample size $n$ is large, $\widehat{\boldsymbol{\theta}} \stackrel{ {\rm a} }{ \sim } {\rm N}_{p}(\boldsymbol{\theta}, J^{-1}(\boldsymbol{\theta}) ),$ where $\stackrel{ {\rm a} }{ \sim }$ denotes an approximate distribution, ${\rm N}_{p}(\cdot, \cdot)$ denotes the $p$-dimensional normal distribution, and $J^{-1}(\boldsymbol{\theta})$ is the inverse of the observed information matrix. Based on this, consider testing the hypotheses $H_0$: $\theta_j = \theta_j^0$ versus $H_1$: $\theta_j \neq \theta_j^0$, where $\theta_j^0$ is a specific value for the unknown parameter $\theta_j$, $j = 1, \ldots, p$. Under $H_0$, $z_j = (\widehat{\theta}_j - \theta_j^0)/{\rm se}(\widehat{\theta}_j)\stackrel{ {\rm a} }{ \sim } {\rm N}(0,1),$ where $\widehat{\theta}_j$ is the MLE of $\theta_j$, and ${\rm se}(\widehat{\theta}_j) = \sqrt{J_{jj}}$ is the asymptotic standard error of $\widehat{\theta}_j$, with $J_{jj}$ the $j$-th element of the diagonal of $J^{-1}(\boldsymbol{\theta})$. The test statistic $z_j$ is the square root of Wald's statistic \citep{wald1943}, which is widely used to perform hypothesis tests in regression models. In the next section, we will apply this test to evaluate the significance of the parameters in the {\color{black} RDTWD (and RDTED)} regression model, setting $\theta_j^0 = 0$.

To assess the goodness of fit of the {\color{black} RDTWD (and RDTED)} regression model, we will use the quantile residual introduced by \cite{dunn1996randomized}, defined as follows:
{\color{black}
\begin{eqnarray*}\label{residuals}
r_i =
    \Phi^{-1}(\widehat{F}_{RDTWD}(y_i; \widehat{\mu}_i, \widehat{\sigma}_i, \widehat{\nu}_i)),
\end{eqnarray*}
}where $\Phi^{-1}(\cdot)$ is the CDF of the standard normal distribution, and {\color{black} $\widehat{F}_{RDTWD}(y_i; \widehat{\mu}_i, \widehat{\sigma}_i, \widehat{\nu}_i)$} denotes the CDF in (\ref{RCDF2}) evaluated in $y_i$, $\widehat{\mu}_i$, $\widehat{\sigma}_i$, {\color{black}and $\widehat{\nu}_i$}. If the {\color{black} RDTWD} regression model is well-fitted, $r_i$ is approximately distributed as a standard normal distribution.
To evaluate the empirical distribution of quantile residuals within the GAMLSS framework, it is common practice to use worm plots \citep{buuren2001worm}. {\color{black} If the RDTWD regression model is well-fitted, the points in the worm plot should be close to the middle horizontal line, with 95\% of them lying between the upper and lower dotted curves indicating no systematic departure \citep{stasinopoulos2017flexible}. Additionally, Q-Q boxplots \citep{rodu2022q} can be used to assess the normality of residuals \citep{deCarvalho2026, otiniano2026regression}. This method incorporates tail information from a Q-Q plot into a traditional boxplot by replacing the boxplot’s whiskers with the Q-Q plot tails and displaying these tails alongside confidence bands.
These diagnostic plots can be created using the {\tt gamlss.ggplots} \citep{gamlssggplot2} and {\tt qqboxplot} \citep{qqbpackage} packages in {\tt R}.}

{\color{black} Furthermore, we will employ the Akaike Information Criterion (AIC) \citep{Akaike1974}, defined as $ {\rm AIC} =-2\ell(\widehat{\boldsymbol{\theta}}) + 2p$, and the Bayesian Information Criterion (BIC) \citep{Schwarz1978}, defined as $ {\rm BIC} =-2\ell(\widehat{\boldsymbol{\theta}}) + p\log(n)$, to compare alternative regression models. Lower AIC and BIC values indicate a better fit to the data.}

\section{Real data application}\label{App}

In this section, we provide a real data application related to 385 measurements from small-leaved lime trees (Tilia cordata) growing in Russia. The data set called {\tt lime} is available in the package {\tt GLMsData} \citep{GLMsDatapackage} of software {\tt R}. Our goal here is to model the foliage biomass measured in kg (oven dried matter) as a function of age of the tree in years, and the origin of the tree (coppice, natural, planted).

{\color{black}
Figure \ref{foliageplots} displays the plots of foliage biomass versus the covariates origin and age of the tree. These plots suggest that these variables may be relevant in explaining foliage biomass. In particular, older trees tend to have greater foliage biomass, whereas trees with planted origin tend to exhibit lower foliage biomass.
}
\begin{figure}[!htb]
	\centering
	\subfigure{\includegraphics[width=8cm,height=7cm]{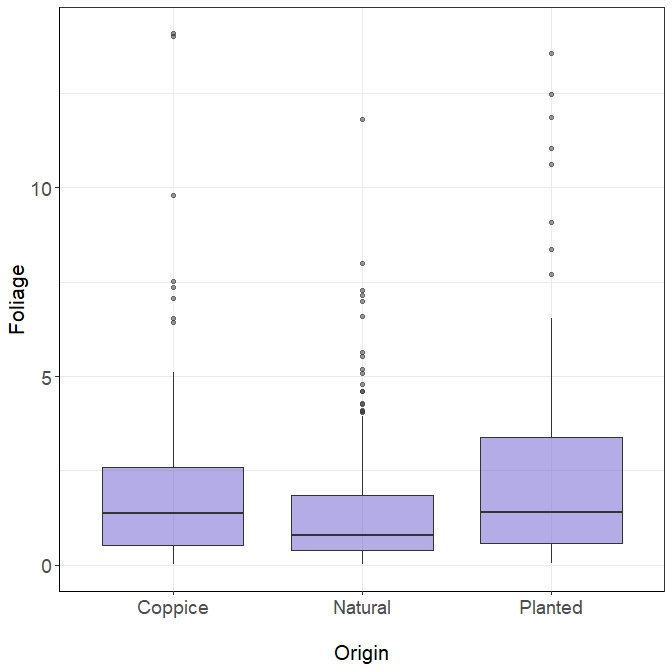}}
	\qquad
	\subfigure{\includegraphics[width=8cm,height=7cm]{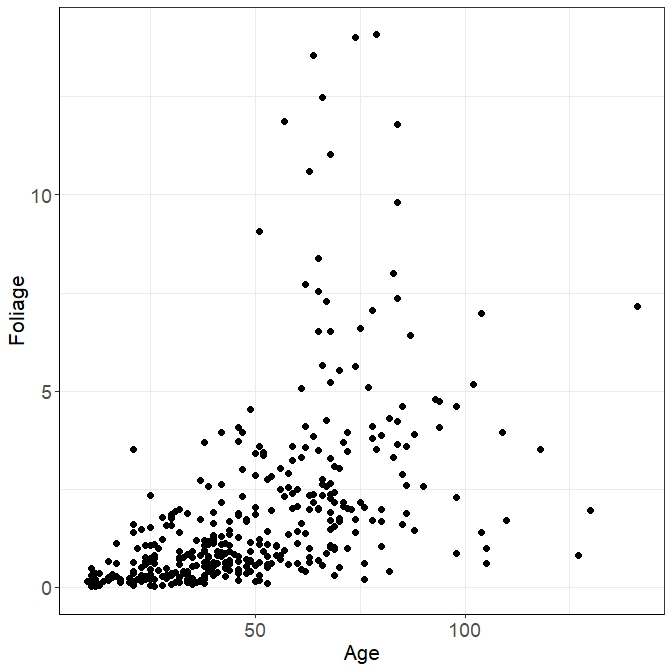}}
	\caption{ {\color{black} Plots of foliage biomass as a function of origin and age.} }
	\label{foliageplots}
\end{figure}

{\color{black} To demonstrate the advantages of the proposed RDTED and RDTWD models compared to alternative regression models commonly used, we also fitted gamma, inverse Gaussian, Weibull and log-normal regression models, all of which are implemented within the GAMLSS framework. Notably,  gamma and inverse Gaussian models are special cases of generalized linear models (GLMs) \citep{nelder1972generalized}, which are widely used to analyze positive, unimodal continuous data. In the gamma, inverse Gaussian, and Weibull models, the mean of the response is modeled. However, in the log-normal model, the median of the response is modeled, similar to the RDTED and RTDWD models. Except for the RDTWD model, which has three parameters, all the other models considered have two parameters.}

{\color{black}The regression models under consideration specify the parameter \(\mu_i\), which represents the $i$th response median or mean, as follows:}
\begin{eqnarray*}
\log(\mu_i) = \alpha_1 + \alpha_2{\rm age}_i + \alpha_3 {\rm originN}_i + \alpha_4{\rm originP}_i,
\end{eqnarray*}
where $i=1, \ldots, 385$, ${\rm age}_i$ is the age of tree $i$, ${\rm originN}_i=1$ if the origin of tree $i$ is natural, ${\rm originP}_i=1$ if the origin is planted, and ${\rm originN}_i={\rm originP}=0$ if the origin is coppice.
{\color{black} The second parameter, denoted by \(\sigma\) in all models, is initially assumed to be constant (i.e., not influenced by the covariates) and is expressed as $\log(\sigma)=\gamma_1$. In particular, for the RDTWD model is considered $\log(\nu)=\xi_1$.}

{\color{black}
Table \ref{tab:aicbic} shows the AIC and BIC values for the fitted regression models for the foliage data. According to the AIC results, the preferred regression model is the RDTED model followed by the RTDWD, gamma, Weibull, log-normal, and IG models. Based on the BIC results, the best model remains the same; the only difference is the ranking, where the positions of the gamma and RTDWD models are swapped.}

\begin{table}[!htb]
\centering
\caption{{\color{black} AIC and BIC values for the fitted regression models for foliage data.}}
\label{tab:aicbic}
\begin{tabular}{l|cc}
\toprule
\textbf{Model} & \textbf{AIC} & \textbf{BIC} \\\hline
RDTED                     & 995.86  &  1015.63\\
RDTWD                     & 996.32  &  1020.04\\
Gamma                     & 998.56  &  1018.33 \\
Inverse Gaussian          & 1288.19 &  1307.96 \\
Weibull                   & 1007.54 &  1027.31 \\
Log-normal                & 1016.67 &  1036.44 \\
\bottomrule
\end{tabular}
\end{table}
{\color{black} 
To assess the goodness of the fitted regression models, we consider worm plot and Q--Q boxplots of the quantile residuals, as shown in {\color{black}  Figures \ref{diagplots_M11} and \ref{diagplots_M12}}.
\begin{figure}[!htb]
	\centering
	\subfigure[Worm plot for RDTED model]{\includegraphics[width=6cm,height=5cm]{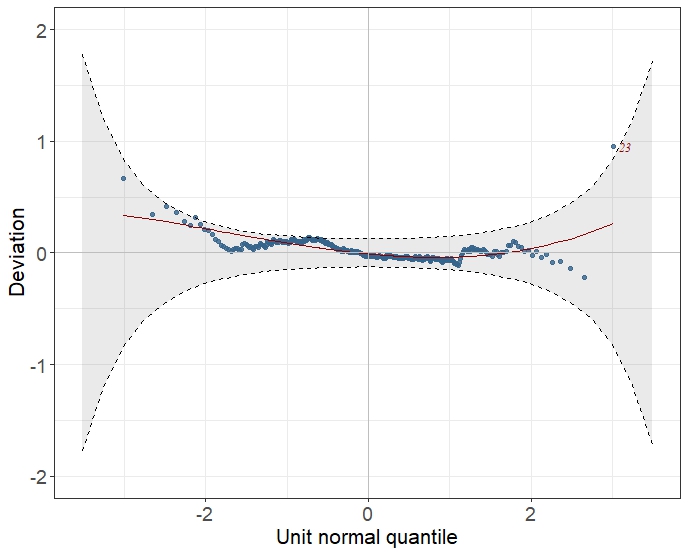}}
	\qquad
    \subfigure[Worm plot for RDTWD model]
    {\includegraphics[width=6cm,height=5cm]{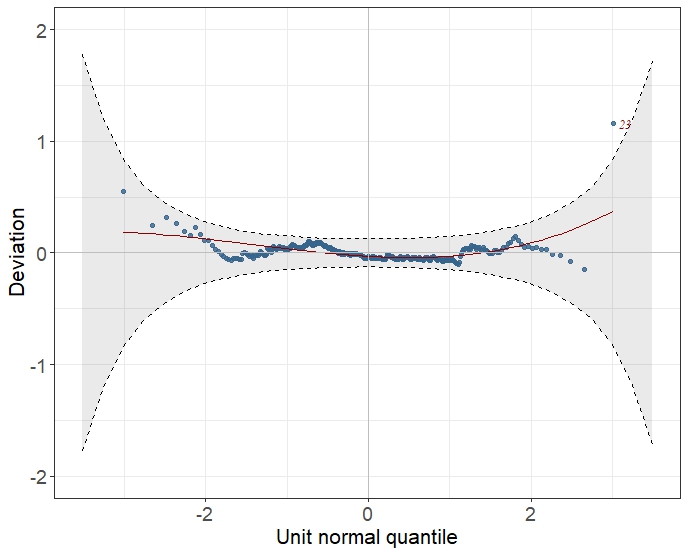}}\\
    \subfigure[Worm plot for Gamma model]
    {\includegraphics[width=6cm,height=5cm]{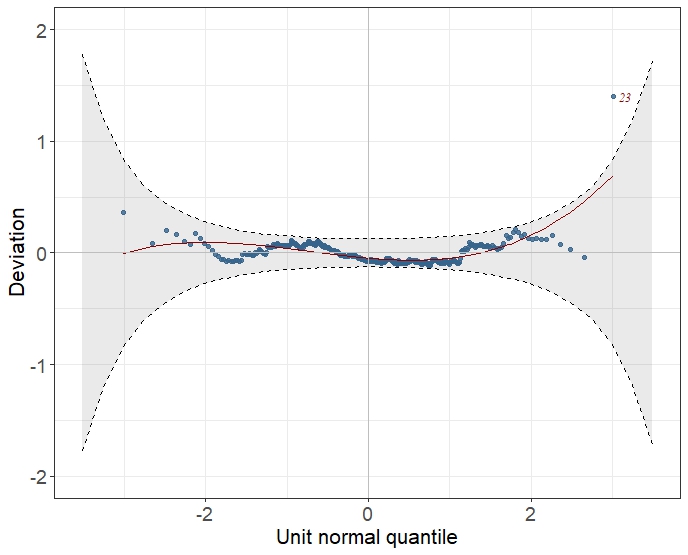}}
	\qquad
    \subfigure[Worm plot for IG model]
    {\includegraphics[width=6cm,height=5cm]{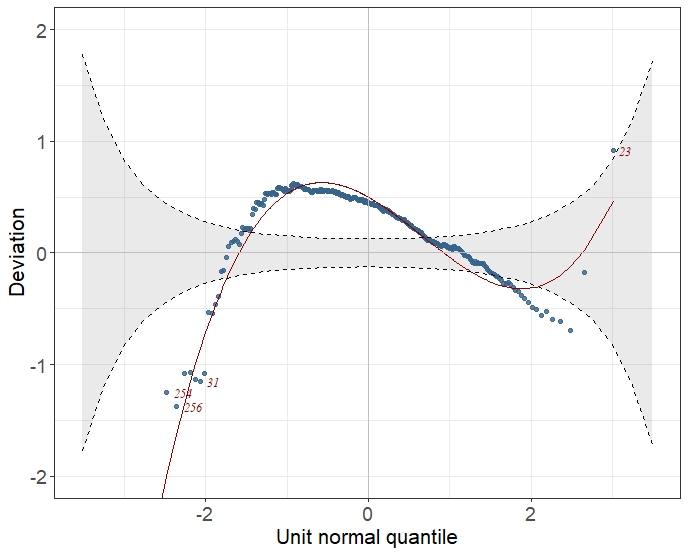}}\\
    \subfigure[Worm plot for Weibull model]
    {\includegraphics[width=6cm,height=5cm]{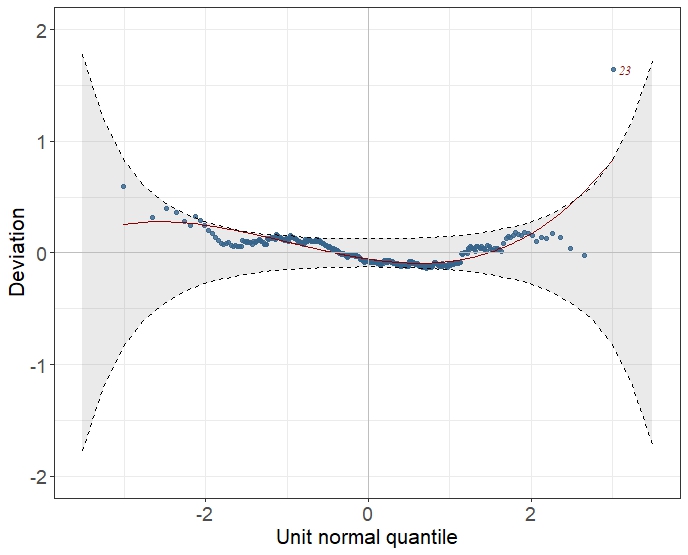}}
	\qquad
    \subfigure[Worm plot for Log-normal model]
    {\includegraphics[width=6cm,height=5cm]{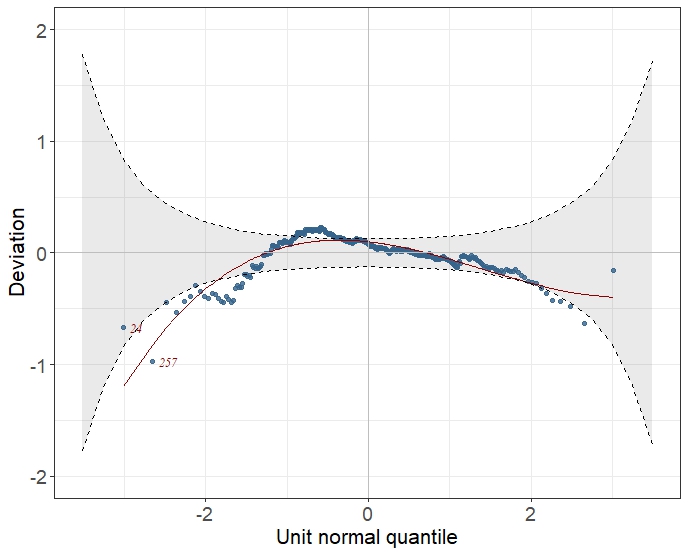}}
	\caption{ {\color{black} Worm plots  of the quantile residuals under the fitted regression models.} }
	\label{diagplots_M11}
\end{figure}

\begin{figure}[!htb]
	\centering
	\subfigure[Q-Q boxplot for RTDED model]{\includegraphics[width=6cm,height=5cm]{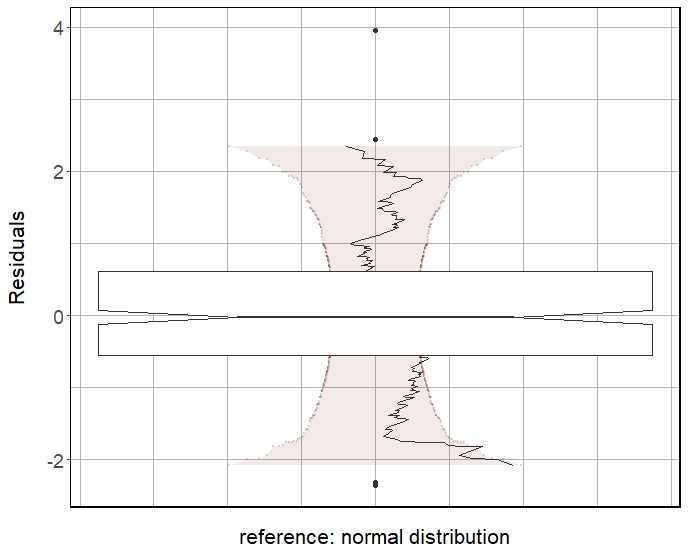}}
    \qquad
	\subfigure[Q-Q boxplot for RDTWD model]{\includegraphics[width=6cm,height=5cm]{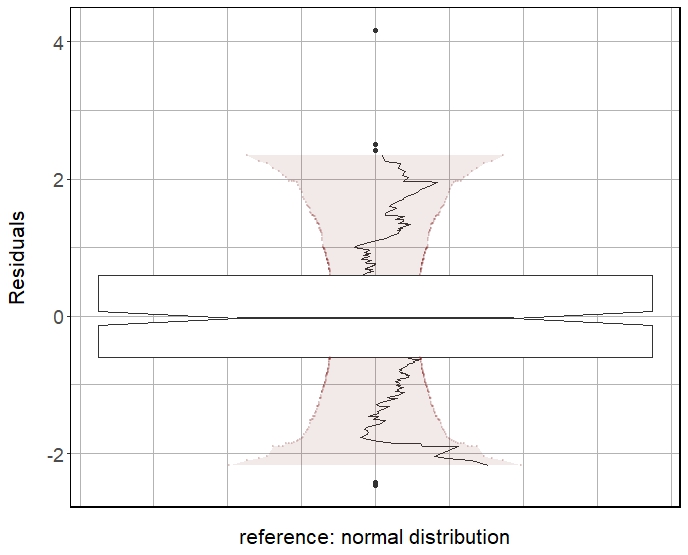}}\\
	\subfigure[Q-Q boxplot for Gamma model]{\includegraphics[width=6cm,height=5cm]{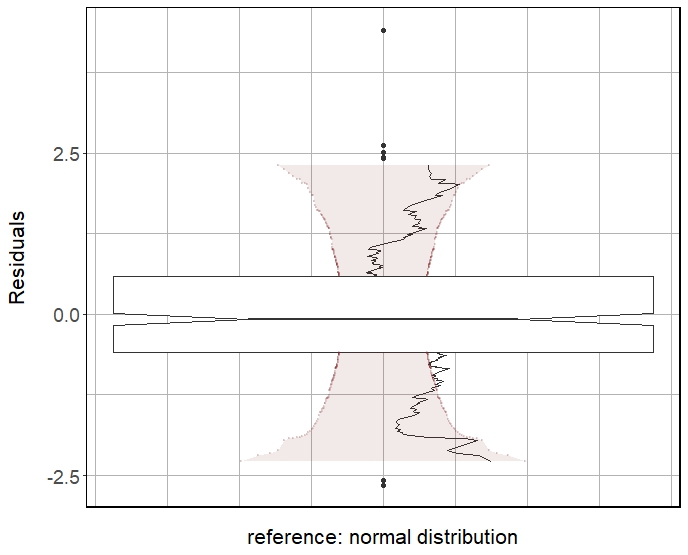}}
	\qquad
	\subfigure[Q-Q boxplot for IG model]{\includegraphics[width=6cm,height=5cm]{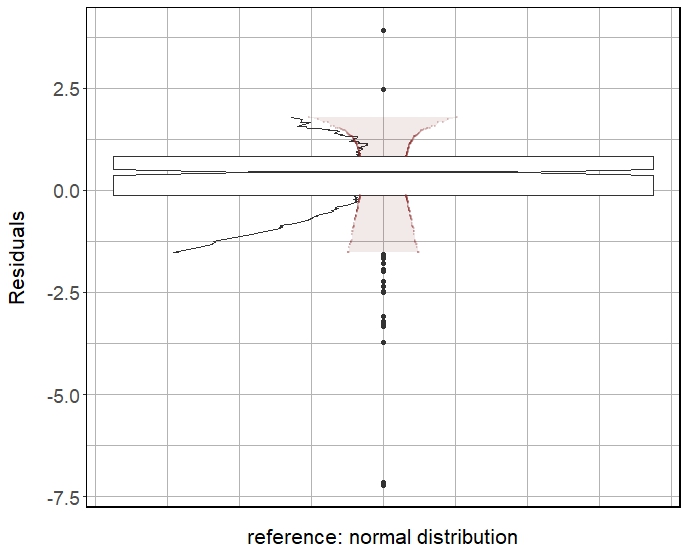}}\\
    \subfigure[Q-Q boxplot for Weibull model]{\includegraphics[width=6cm,height=5cm]{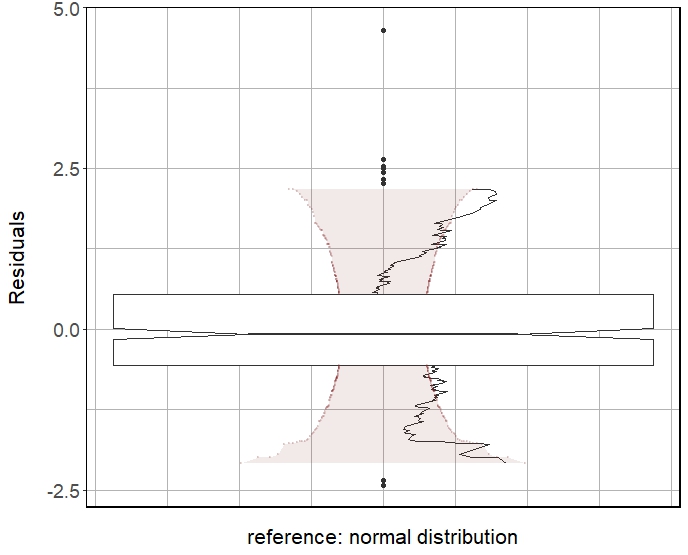}}
	\qquad
	\subfigure[Q-Q boxplot for Log-normal model]{\includegraphics[width=6cm,height=5cm]{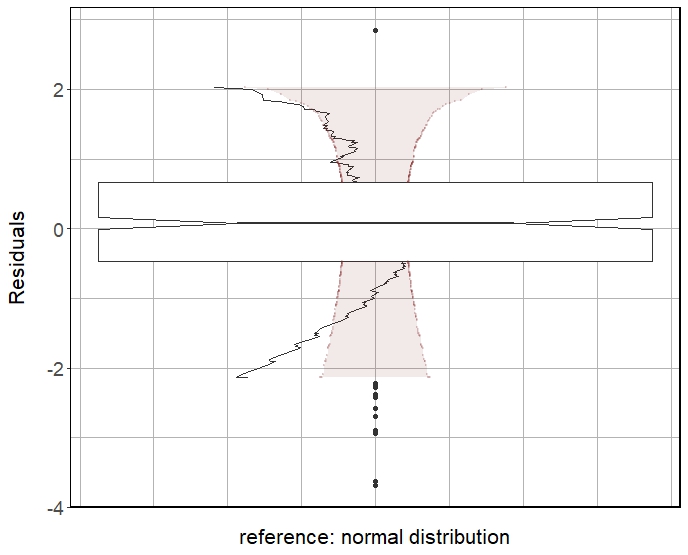}}
	\caption{{\color{black} Q-Q boxplots of the quantile residuals under the fitted regression models.} }
	\label{diagplots_M12}
\end{figure}

The worm plots for the inverse Gaussian and log-normal models reveal a lack of fit, as the points deviate significantly from the middle horizontal line, with several points falling outside the confidence intervals, indicating a systematic departure.
On the other hand, the worm plots for the RDTED, RDTWD, gamma, and Weibull models exhibit better behavior, as the points are close to the middle horizontal line. It is also evident that, in these four models, more than 95\% of the points lie between the upper and lower dotted curves, with no apparent systematic deviation. In particular, a high deviation (observation 23) is observed in all fits; however, this deviation is slightly more pronounced in the Weibull and gamma models than in the RDTED and RDTWD models.
Finally, analyzing the Q-Q boxplots in Figure \ref{diagplots_M12}, most of the fitted models exhibit some deviations from the expected behavior, displaying whiskers that do not lie within the confidence bands. However, the Q-Q boxplots for the RDTED and RDTWD models exhibit appropriate behavior, showing symmetry around the median and whiskers that clearly lie within the confidence bands. These results, obtained from the residual plots, are consistent with the AIC and BIC values presented in Table \ref{tab:aicbic}. The models with the smallest AIC and BIC values are RDTED, RDTWD, gamma, and Weibull, with the criterion values for the RDTED model being slightly smaller.

}

{\color{black} 
Table \ref{tab:estimates} shows the estimates, standard errors, $z$-statistics, and $p$-values of the test of nullity of coefficients for the RDTED, RDTWD, gamma, and Weibull regression models fitted to the foliage data. We note that all fitted models indicate the regression coefficients are statistically significant at the nominal level of 5\%. This suggests that the covariates origin and age are relevant for explaining the median foliage biomass. Additionally, under the RDTWD model, the $p$-value associated with the null hypothesis $H_0$: $\xi_1=0$, which is equivalent to $H_0$: $\nu_i=1, \forall i$, is $0.217$. This indicates that the RDTED model adequately captures the foliage data without requiring the more complex RDTWD model.}
\begin{table}[!htb]
	\centering
	\caption{{\color{black}Estimates, standard errors,  $z$-stat and $p$-values for regression models for Foliage data.}}\vspace{0.2cm}
	\label{tab:estimates}
	\scalefont{0.80}
	\def\arraystretch{1} \begin{tabular}{rrrrrrrrrr}\hline
& \multicolumn{4}{c}{RDTED model}                                   && \multicolumn{4}{c}{RDTWD model } \\  \cmidrule{2-5}\cmidrule{7-10}
                         &  Estimate & Std. error & $z$-stat &   $p$-value  && Estimate   &Std. error &$z$-stat   &$p$-value \\ \cmidrule{2-5}\cmidrule{7-10}
\text{$\mu$ submodel}    &&&&\\
		$\alpha_1$        & $-1.578$  &  $0.131$  &  $-12.041$  &  $<0.001$  && $-1.568$  &  $0.128$  &  $-12.218$  &  $<0.001$  \\
		$\alpha_2$        & $0.035$   &  $0.002$  &  $16.612$   &  $<0.001$  && $0.035$   &  $0.002$  &  $17.164$   &  $<0.001$   \\
		$\alpha_3$        & $-0.402$  &  $0.097$  &  $-4.160$   &  $<0.001$  && $-0.391$  &  $0.095$  &   $-4.132$  &  $<0.001$    \\
		$\alpha_4$        & $0.492$   &  $0.132$  &  $3.732$    &  $<0.001$  && $0.489$   &  $0.127$  &   $3.846$   &  $<0.001$   \\\cmidrule{2-5}\cmidrule{7-10}
   \text{$\sigma$ submodel}    &&&&\\
        $\gamma_1$          & $-1.904$  &  $0.327$  &  $-5.818$   &  $<0.001$  && $-2.281$  &  $0.515$  &   $-4.428$  &  $<0.001$ \\ \cmidrule{2-5}\cmidrule{7-10}
     \text{$\nu$ submodel}    &&&&\\
        $\xi_1$             &     -     &      -    &       -     &       -    && $0.061$  &  $0.049$   &   $1.237$   &  $0.217$ \\
  \cmidrule{2-5}\cmidrule{7-10}\\\hline
& \multicolumn{4}{c}{Gamma model}                                   && \multicolumn{4}{c}{Weibull model } \\  \cmidrule{2-5}\cmidrule{7-10}
                         &  Estimate & Std. error & $z$-stat &   $p$-value  && Estimate   &Std. error &$z$-stat   &$p$-value \\ \cmidrule{2-5}\cmidrule{7-10}
\text{$\mu$ submodel}    &&&&\\
		$\alpha_1$        & $-1.330$  &  $0.122$  &  $-10.865$  &  $<0.001$  &&  $-1.325$  &  $0.124$  &  $-10.683$  &  $<0.001$\\
		$\alpha_2$        & $0.035$   &  $0.002$  &  $17.372$   &  $<0.001$  &&  $0.035$   &  $0.002$  &  $16.978$   &  $<0.001$\\
		$\alpha_3$        & $-0.353$  &  $0.092$  &  $-3.828$   &  $<0.001$  &&  $-0.318$  &  $0.092$  &  $-3.436$   &  $<0.001$ \\
		$\alpha_4$        & $0.522$   &  $0.122$  &  $4.288$    &  $<0.001$  &&  $0.545$   &  $0.122$  &  $4.478$    &  $<0.001$  \\\cmidrule{2-5}\cmidrule{7-10}
  \text{$\sigma$ submodel}    &&&&\\
      $\gamma_1$          & $-0.209$  &  $0.033$  &  $-6.363$   &  $<0.001$  &&  $0.210$   &  $0.038$  &  $5.540$    &  $<0.001$   \\
  \cmidrule{2-5}\cmidrule{7-10}\\\hline
	\end{tabular}
\end{table}

{\color{black}
It is of interest to investigate whether the goodness of fit can be improved by modeling the second parameter, \(\sigma\), in all the considered models; that is, by considering double regression models. Therefore, the regression models considered now assume the same specification for the median or mean \(\mu\); however, the parameter \(\sigma\) is specified by $g_\sigma(\sigma_i) = \gamma_2{\rm age}_i$. For reasons of goodness of fit, the intercept and the covariate origin are not included in the $\sigma$ submodel. Additionally, the link function \( g_\sigma(\cdot) \) was selected to improve the fit of each regression model. Specifically, for RDTED and RDTWD models, the link function for \(\sigma\) was chosen as the square root, \( g_\sigma(\sigma) = \sqrt{\sigma} \). For the remaining models, the logarithmic link function \( g_\sigma(\sigma) = \log(\sigma) \) was used.}
\begin{figure}[!htb]
	\centering
	\subfigure[Worm plot for RDTED model]{\includegraphics[width=6cm,height=5cm]{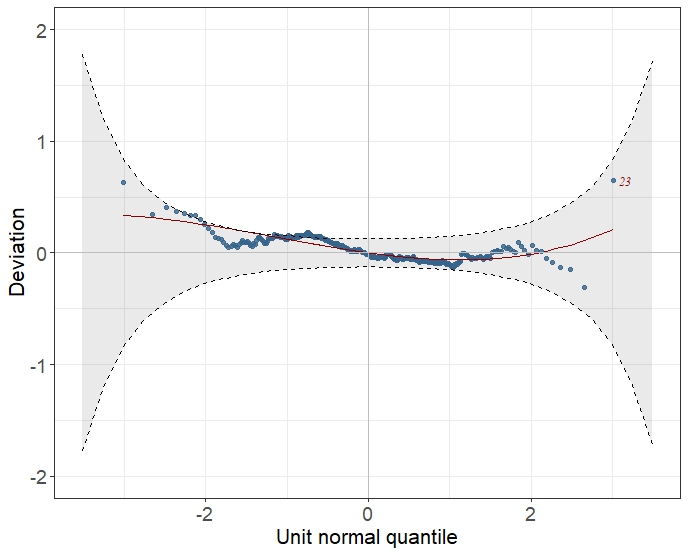}}
    \qquad
    \subfigure[Worm plot for RDTWD model]{\includegraphics[width=6cm,height=5cm]{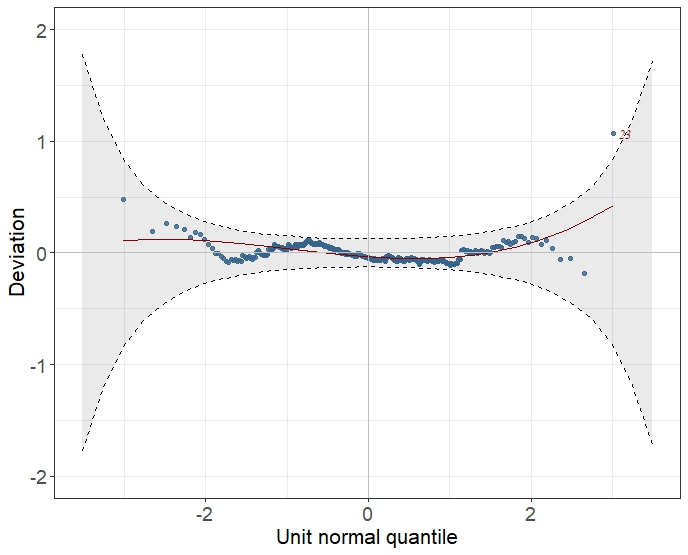}}\\
    \subfigure[Worm plot for Gamma model]{\includegraphics[width=6cm,height=5cm]{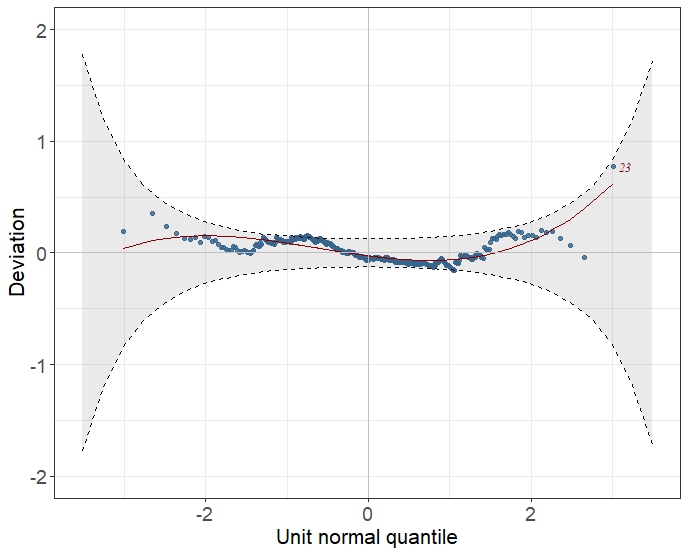}}
     \qquad
    \subfigure[Worm plot for IG model]{\includegraphics[width=6cm,height=5cm]{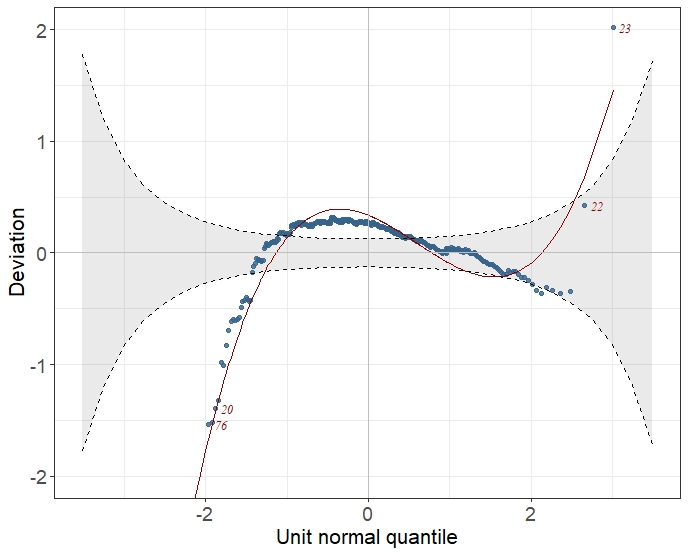}}\\
    \subfigure[Worm plot for Weibull model]{\includegraphics[width=6cm,height=5cm]{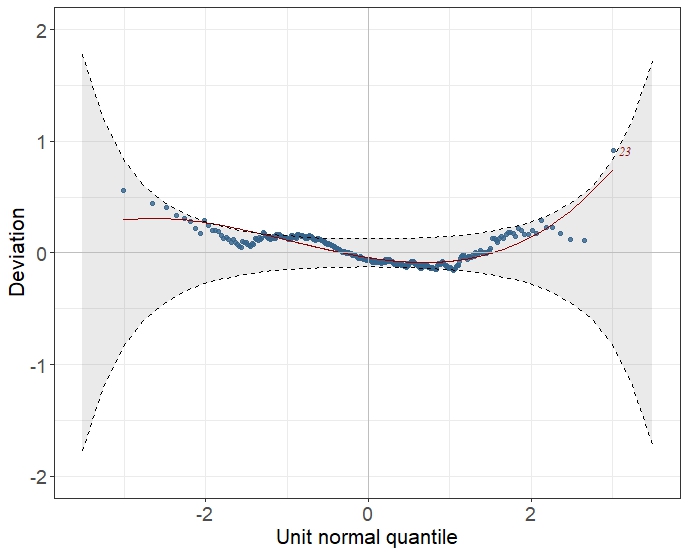}}
     \qquad
    \subfigure[Worm plot for Log-normal model]{\includegraphics[width=6cm,height=5cm]{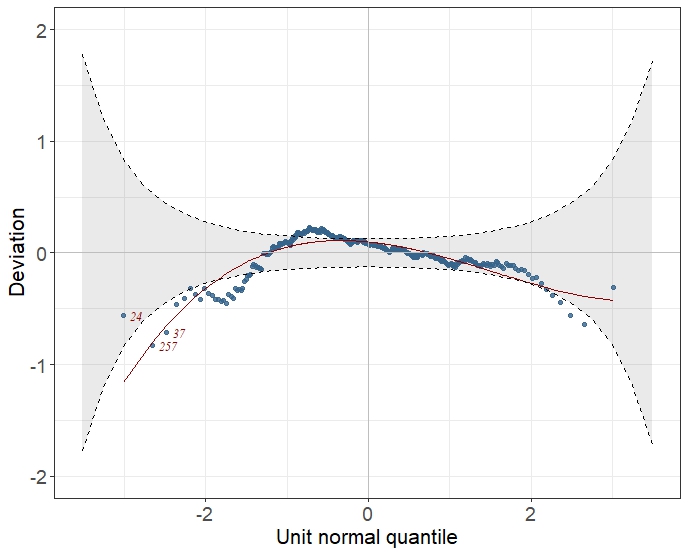}}
	\caption{ {\color{black} Worm plots  of the quantile residuals under the fitted double regression models.} }
	\label{diagplots_M21}
\end{figure}

\begin{figure}[!htb]
	\centering
	\subfigure[Q-Q boxplot for RTDED model]{\includegraphics[width=6cm,height=5cm]{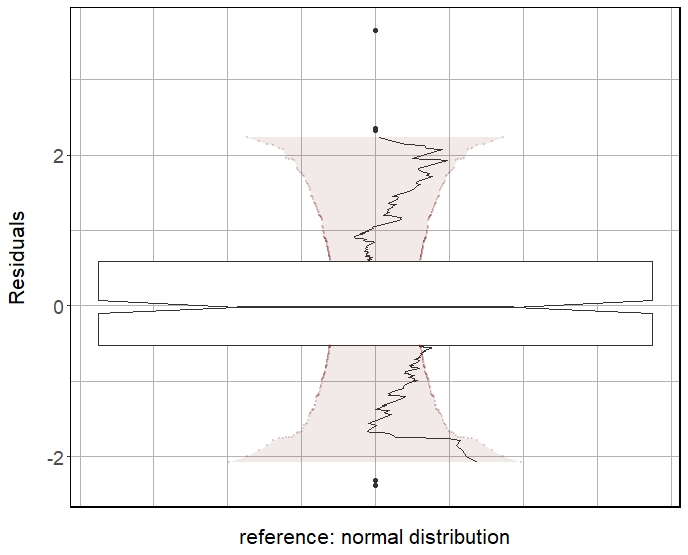}}
    \qquad
    \subfigure[Q-Q boxplot for RTDWD model]{\includegraphics[width=6cm,height=5cm]{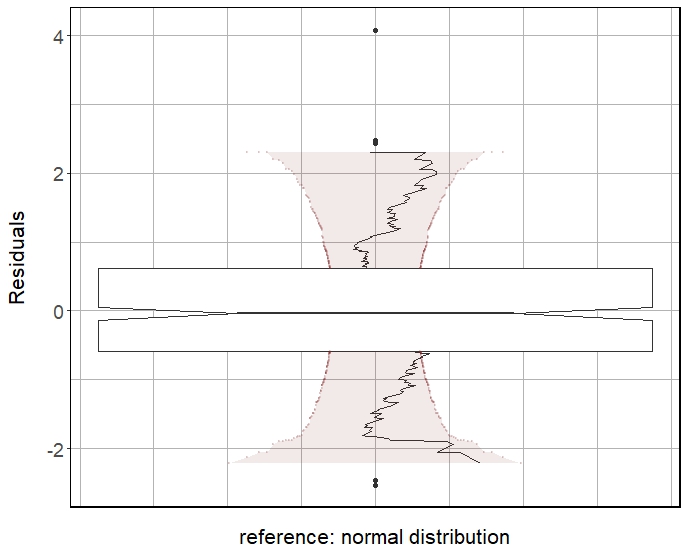}}\\
	\subfigure[Q-Q boxplot for Gamma model]{\includegraphics[width=6cm,height=5cm]{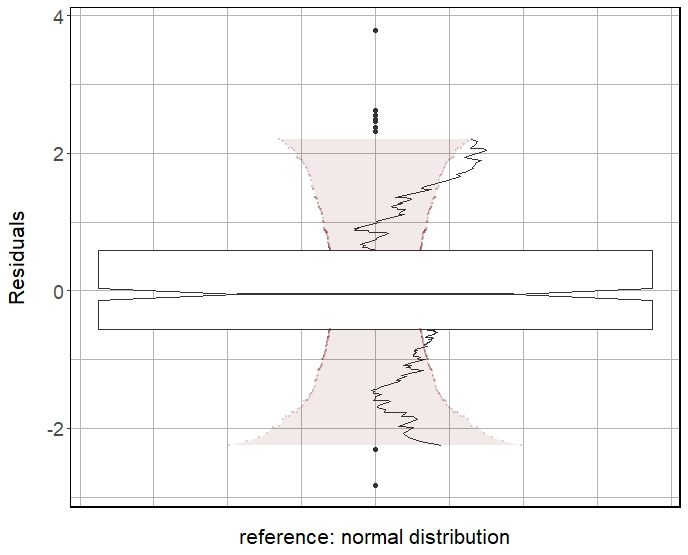}}
    \qquad
	\subfigure[Q-Q boxplot for IG model]{\includegraphics[width=6cm,height=5cm]{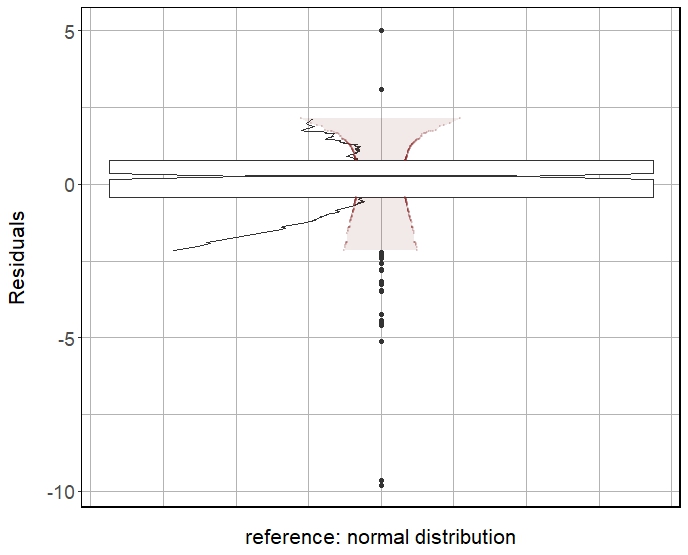}}
    \qquad
    \subfigure[Q-Q boxplot for Weibull model]{\includegraphics[width=6cm,height=5cm]{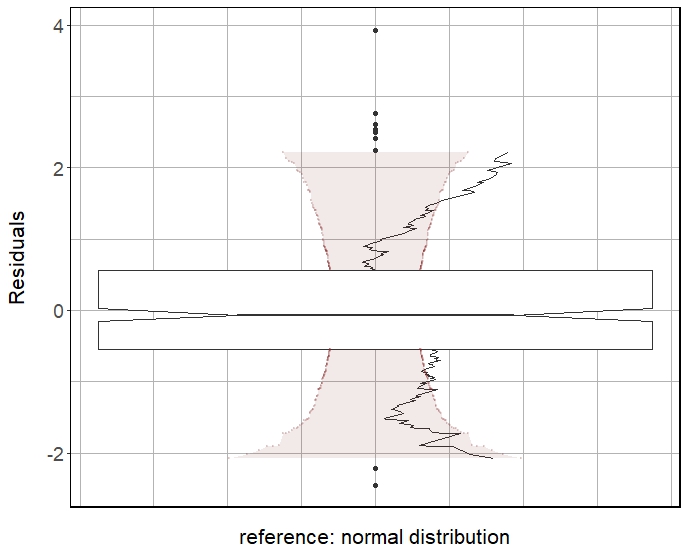}}
    \qquad
	\subfigure[Q-Q boxplot for Log-normal model]{\includegraphics[width=6cm,height=5cm]{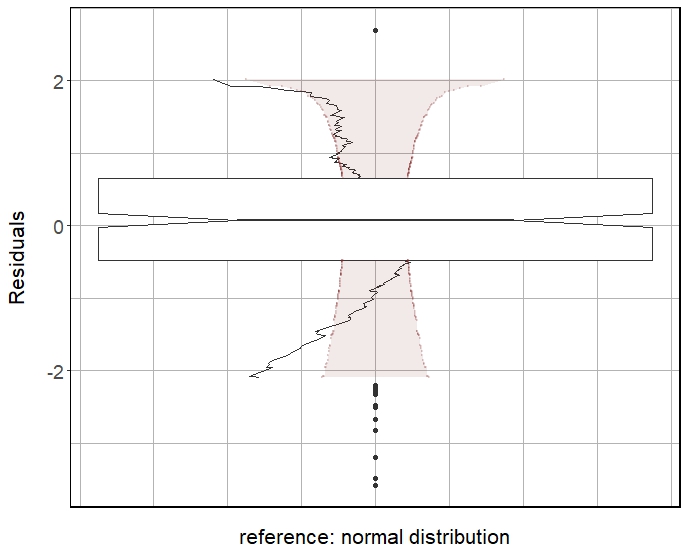}}
	\caption{{\color{black} Q-Q boxplots of the quantile residuals under the fitted double regression models.} }
	\label{diagplots_M22}
\end{figure}

{\color{black} Figures \ref{diagplots_M21} and \ref{diagplots_M22} show the worm plots and Q-Q boxplots for the fitted double regression models. Once again, there is a clear lack of fit for the inverse Gaussian and log-normal models, as the points deviate significantly from the middle horizontal line, with several points falling outside the confidence intervals.
On the other hand, the worm plots for the RDTED, RDTWD, gamma, and Weibull models once again demonstrate similarly good behavior. In particular, the worm plot under the RDTWD model shows slightly better behavior. Also, the Q-Q boxplots shown in Figure \ref{diagplots_M22} reveal some deviations from the expected behavior, with whiskers extending beyond the confidence bands for the gamma and Weibull models. In contrast, the Q-Q boxplots for the RDTED and RDTWD models exhibit appropriate behavior, characterized by symmetry around the median and whiskers that lie within the confidence bands. These findings are consistent with the AIC and BIC values presented in Table \ref{tab:aicbic2}, where the RDTED and RDTWD models have the lowest AIC and BIC values. Additionally, we observe that the RDTED, gamma, and inverse Gaussian models show slight increases in AIC and BIC values when the covariate age is added to the $\sigma$ submodel, whereas the RDTWD, Weibull, and log-normal models exhibit slight decreases in these criteria.}
\begin{table}[!htb]
\centering
\caption{{\color{black} AIC and BIC values for the fitted double regression models for foliage data.}}
\label{tab:aicbic2}
\begin{tabular}{l|cc}
\toprule
\textbf{Model} & \textbf{AIC} & \textbf{BIC} \\\hline
RDTED                     &  996.98  & 1016.75\\
RDTWD                     &  994.83  & 1018.55\\
Gamma                     &  1000.92 & 1020.69 \\
Inverse Gaussian          &  1367.82 & 1387.58\\
Weibull                   &  1006.50 & 1026.27 \\
Log-normal                &  1015.34 & 1035.11\\
\bottomrule
\end{tabular}
\end{table}

{\color{black} 
Table \ref{tab:estimates2} shows the estimates, standard errors, $z$-statistics, and $p$-values of the test of nullity of coefficients for the double RDTED and RDTWD regression models fitted to the foliage data. We observe that the estimates for the $\mu$ submodel are very close to those shown in Table \ref{tab:estimates}, with the same significance level. This finding is consistent with the lack of improvement in goodness of fit for the RDTED model and the slight improvement observed in the RDTWD model, as indicated by the residual plots and criterion values.
However, under the RDTWD model, the $p$-value associated with the null hypothesis $H_0$: $\xi_1=0$ is 0.038. Therefore, at a nominal significance level of 5\%, \(\nu_i\) is not equal to 1.0 for all \(i\); that is, the fits obtained by RDTED and RDTWD differ considerably for the foliage data when \(\sigma\) is also modeled by the covariate age. Based on the residual plots, AIC, and the results of the Wald test, the best-fitting model for the foliage data, considering double regression, is the RDTWD model.

 }
\begin{table}[!htb]
	\centering
	\caption{{\color{black}Estimates, standard errors,  $z$-stat and $p$-values for double regression models for foliage data.}}\vspace{0.2cm}
	\label{tab:estimates2}
	\scalefont{0.80}
	\def\arraystretch{1} \begin{tabular}{rrrrrrrrrr}\hline
& \multicolumn{4}{c}{RDTED model}                                   && \multicolumn{4}{c}{RDTWD model } \\  \cmidrule{2-5}\cmidrule{7-10}
                         &  Estimate & Std. error & $z$-stat &   $p$-value  && Estimate   &Std. error &$z$-stat   &$p$-value \\ \cmidrule{2-5}\cmidrule{7-10}
\text{$\mu$ submodel}    &&&&\\
		$\alpha_1$        & $-1.676$  &  $0.142$  &  $-11.789$  &  $<0.001$  && $-1.618$  &  $0.134$  &  $-12.070$  &   $<0.001$    \\
		$\alpha_2$        & $0.037$   &  $0.002$  &  $15.915$   &  $<0.001$  && $0.036$   &  $0.002$  &  $16.777$   &   $<0.001$  \\
		$\alpha_3$        & $-0.373$  &  $0.100$  &  $-3.735$   &  $<0.001$  && $-0.364$  &  $0.094$  &  $-3.884$   &   $<0.001$    \\
		$\alpha_4$        & $0.510$   &  $0.136$  &  $3.751$    &  $<0.001$  && $0.503$   &  $0.126$  &  $3.999$    &   $<0.001$  \\\cmidrule{2-5}\cmidrule{7-10}
        {\it $\sigma$} \text{submodel}    &&&\\
        $\gamma_2$        & $0.007$   &  $0.001$  &  $4.810$    &  $<0.001$  && $0.005$   &  $0.001$  &   $3.779$   &  $<0.001$ \\
  \cmidrule{2-5}\cmidrule{7-10}\\
  {\it $\nu$} \text{submodel}    &&&\\
        $\xi_1$          &     -      &     -     &      -      &       -    && $0.092$   &  $0.044$  &   $2.078$   &  $0.038$     \\\cmidrule{2-5}\cmidrule{7-10}\\\hline
	\end{tabular}
\end{table}

{\color{black} 

To evaluate whether a more complex RTDWD model could provide a better fit, we also considered introducing covariates in the $\nu$ submodel as $ \log(\nu_i) = \xi_1 + \xi_2{\rm originN}_i + \xi_3{\rm originP}_i$. Table \ref{tab:estimates3} shows the estimates, standard errors, $z$-statistics, and $p$-values of the test of nullity of coefficients for the RTDWD regression model with three submodels. At a nominal level of 5\%, all regression coefficients are significant. In particular, the covariate origin is relevant for modeling the parameter \(\nu\). Furthermore, the AIC and BIC values for this fitted RTDWD model are 991.78 and 1023.41, respectively. Thus, while the AIC is smaller compared to previous models, the BIC is larger. The worm plot and Q-Q boxplot for this fitted model (not shown) are similar to those obtained previously for RDTWD models. Overall, we do not observe a substantial improvement in the RDTWD fit. Therefore, we selected the RTDWD model with covariates in the median and \(\sigma\) submodels as the final model.
\begin{table}[!htb]
	\centering
	\caption{{\color{black}Estimates, standard errors,  $z$-stat and $p$-values for the RDTWD regression model with covariates in all submodels for foliage data.}}\vspace{0.2cm}
	\label{tab:estimates3}
	\scalefont{0.80}
	\def\arraystretch{1} \begin{tabular}{rrrrr}\hline
                         &  Estimate & Std. error & $z$-stat &   $p$-value  \\ \cmidrule{2-5}
\text{$\mu$ submodel}    &&&&\\
		$\alpha_1$        &  $-1.564$  &  $0.126$  &  $-12.412$  & $<0.001$   \\
		$\alpha_2$        &  $0.036$   &  $0.002$  &  $17.399$   & $<0.001$\\
		$\alpha_3$        &  $-0.455$  &  $0.097$  &  $-4.702$   & $<0.001$  \\
		$\alpha_4$        &  $0.390$   &  $0.141$  &  $2.757$    & $0.006$ \\\cmidrule{2-5}
        {\it $\sigma$} \text{submodel}    &&&\\
        $\gamma_2$        & $ 0.005$   &  $0.001$  &  $3.650$    &  $<0.001$\\
  \cmidrule{2-5}\\
  {\it $\nu$} \text{submodel}    &&&\\
        $\xi_1$          &  $0.247$    &  $0.071$  &  $3.481$    &  $<0.001$     \\
        $\xi_2$          &  $-0.202$   &  $0.083$  &  $-2.422$   &  $0.016$    \\
        $\xi_3$          &  $-0.241$   &  $0.113$  &  $-2.134$   &  $0.033$     \\
        \cmidrule{2-5}\\\hline
	\end{tabular}
\end{table}

}

{\color{black}

Under the double RDTWD regression model, we have the following conclusions.
We estimate that a 10-year increase in tree age results in a 43.3\% increase in the median foliage biomass, since that $\exp(0.036\times 10) = 1.433$. The median foliage biomass of trees originating from coppice decreases by 30.5\% compared to trees from natural sources, and increases by 65.4\% compared to planted trees, as indicated by \(\exp(-0.364) = 0.695\) and \(\exp(0.503) = 1.654\), respectively.

\section{Concluding remarks}\label{CR}


This paper introduced a new generator of probability distributions, termed the Dutta transformed-G (DT-G) family, which extends a baseline distribution through the inclusion of an additional tail-modulation parameter. The proposed transformation is simple, easy to implement, and capable of generating a broad range of density and hazard rate shapes while preserving the overall structure of the baseline model. An important feature of the DT-G family is that the additional parameter admits a clear interpretation in terms of tail behavior, allowing the construction of heavier-tailed models when needed.

Several theoretical properties of the DT-G family were derived, including results related to critical points, modality, stochastic representation, identifiability, quantiles, moments, and truncated moments. These findings provide a comprehensive understanding of the mathematical structure of the proposed family and establish a foundation for further methodological developments.

To demonstrate the practical usefulness of the DT approach, two particular members of the family, namely the Dutta transformed exponential distribution (DTED) and the Dutta transformed Weibull distribution (DTWD), were studied in detail. Alternative parameterizations were developed so that one of the model parameters corresponds directly to the median of the response variable. This feature enhances interpretability in regression settings, particularly when dealing with asymmetric data or data containing extreme observations.

Building upon these parameterizations, two new regression models, referred to as the RDTED and RDTWD models, were proposed within the GAMLSS framework. Maximum likelihood estimation procedures and diagnostic tools were discussed, providing a complete inferential framework for practical applications. The real-data application involving foliage biomass measurements of small-leaved lime trees illustrated the flexibility and effectiveness of the proposed models. The results showed that the RDTED and RDTWD models provide competitive and interpretable alternatives to existing regression models commonly used in the literature.

Overall, the DT methodology offers a versatile mechanism for constructing new distributions and regression models with interpretable parameters and flexible tail behavior. Future research may explore other baseline distributions within the DT-G family, multivariate extensions, censored-data applications, and more general regression structures. To facilitate reproducibility and further developments, all computational routines used in this study are freely available at
\url{https://github.com/terezinharibeiro/DT_Regression_Models}.
}

	\paragraph*{Disclosure statement}
There are no conflicts of interest to disclose.

{\color{black}

\paragraph*{Acknowledgements} The authors thank the reviewers for their constructive suggestions and valuable comments on a previous version of this manuscript.

  }

\bibliographystyle{apalike}
\bibliography{ref_WTG}

\end{document}